\numberwithin{equation}{section}
\newtheorem{lemma}{Lemma}[section]
\newtheorem{prop}[lemma]{Proposition}
\newtheorem{theorem}[lemma]{Theorem}
\newtheorem{cor}[lemma]{Corollary}
\newtheorem{rem}[lemma]{Remark}
\newtheorem{remark}[lemma]{Remark}
\newtheorem{example}[lemma]{Example}
\newtheorem{definition}[lemma]{Definition}
\newtheorem{corollary}[lemma]{Corollary}
\newcommand{\re}{\begin{rem}\rm}
\newcommand{\mar}{\end{rem}}
\newcommand{\ee }{\mathrm{I}\!\!1}
\renewcommand{\for}{\begin{eqnarray*}}
\newcommand{\mel}{\end{eqnarray*}}
\newcommand{\kl}{\pl \le \pl}
\newcommand{\gl}{\pl \ge \pl}
\newcommand{\lel}{\pl = \pl}
\newcommand{\ez}{{\mathbb E}}
\newcommand{\rz}{{\mathbb R}}
\newcommand{\Mz}{{\mathbb M}}
\newcommand{\cz}{{\mathbb C}}
\newcommand{\ten}{\otimes}
\newcommand{\pl}{\hspace{.1cm}}
\newcommand{\pll}{\hspace{.3cm}}
\newcommand{\qd}{\end{proof}\vspace{0.5ex}}
\newcommand{\om}{\omega}
\newcommand{\al}{\alpha}
\newcommand{\si}{\sigma}
\newcommand{\la}{\lambda}
\newcommand{\eps}{\varepsilon}
\newcommand{\id}{\iota_{\infty,2}^n}
\renewcommand{\S}{{\mathcal S}}
\newcommand{\pf}{\begin{proof}}
\newcommand{\be}{\left|{\atop}}
\newcommand{\summ}{\sum\limits}
\newcommand{\xspace}{\hbox{\kern-2.5pt}}
\newcommand{\xyspace}{\hbox{\kern-1.1pt}}
\newcommand\bra[1]{\langle  #1|}
\newcommand\ket[1]{| #1\rangle}
\definecolor{LightGray}{rgb}{0.94,0.94,0.94}
\definecolor{VeryLightBlue}{rgb}{0.9,0.9,1}
\definecolor{LightBlue}{rgb}{0.8,0.8,1}
\definecolor{DarkBlue}{rgb}{0,0,0.6}
\definecolor{LightGreen}{rgb}{0.88,1,0.88}
\definecolor{MidGreen}{rgb}{0.6,1,0.6}
\definecolor{DarkGreen}{rgb}{0,0.6,0}
\definecolor{DarkGrreen}{rgb}{0,0.8,0}
\definecolor{VeryLightYellow}{rgb}{1,1,0.9}
\definecolor{LightYellow}{rgb}{1,1,0.6}
\definecolor{MidYellow}{rgb}{1,1,0.5}
\definecolor{DarkYellow}{rgb}{0.8,1,0.3}
\definecolor{VeryLightRed}{rgb}{1,0.9,0.9}
\definecolor{LightRed}{rgb}{1,0.8,0.8}
\definecolor{DarkRed}{rgb}{0.8,0.2,0}
\definecolor{DarkRedb}{rgb}{0.6,0.2,0}
\definecolor{DarkLila}{rgb}{0.8,0,1}
\definecolor{Beige}{rgb}{0.96,0.96,0.86}
\definecolor{Gold}{rgb}{1.,0.84,0.}
\definecolor{Goldb}{rgb}{0.7,0.3,0.5}
\definecolor{MyYellow}{rgb}{1.,0.84,0.8}
\newcommand{\lan}{\langle}
\newcommand{\ran}{\rangle}
\def\11{\mathbb{I}}
\DeclareRobustCommand\openone{\leavevmode\hbox{\small1\normalsize\kern-.33em1}}
\renewcommand{\id}{\rm{id}}
\renewcommand{\be}{\begin{equation}}
	\renewcommand{\ee}{\end{equation}}
\newcommand{\bea}{\begin{eqnarray}}
	\newcommand{\eea}{\end{eqnarray}}
\newcommand{\beas}{\begin{eqnarray*}}
	\newcommand{\eeas}{\end{eqnarray*}}
\newtheorem*{theorem*}{Theorem}
\newtheorem*{remark*}{Remark}
\newtheorem*{lemma*}{Lemma}
\newtheorem*{note*}{Note}
\newtheorem*{prop*}{Proposition}
\newcommand{\Tr}{\mbox{Tr}}
\newcommand{\ti}{\ket{\tilde{i}}}
\newcommand{\trho}{\Tilde{\rho}}
\newcommand{\op}[2]{|#1\rangle\langle #2|}
\newcommand{\wt}{\widetilde}
\newcommand{\q}{\quad}
\begin{document}

\title{Quantum Secret Sharing and Tripartite Information}
\author[M. Junge]{Marius Junge}
\address{Department of Mathematics\\
University of Illinois, Urbana, IL 61801, USA}
\email[Marius Junge]{junge@math.uiuc.edu}
\author[C. Kim]{Chloe Kim}
\address{Department of Electrical \& Computer Engineering\\
University of Illinois, Urbana, IL 61801, USA}
\email[Chloe Kim]{kim705@illinois.edu}
\author[G. Liu]{Guangkuo  Liu}
\address{Department of Physics\\
University of Illinois, Urbana, IL 61801, USA}
\email[Guangkuo Liu]{gl5@illinois.edu}
\author[P. Wu]{Peixue Wu}
\address{Department of Mathematics\\
University of Illinois, Urbana, IL 61801, USA}
\email[Peixue Wu]{peixuew2@illinois.edu}

\thanks{MJ is partially supported by NSF grants  DMS 1800872 and Raise-TAG 1839177. }

\maketitle
\begin{abstract}
We develop a connection between tripartite information $I_3$, secret sharing protocols and multi-unitaries. This leads to explicit $((2,3))$ threshold  schemes in arbitrary dimension minimizing tripartite information $I_3$. As an application we show that Page scrambling unitaries simultaneously work for all secrets shared by Alice. Using the $I_3$-Ansatz for imperfect sharing schemes we discover examples  of VIP sharing  schemes.

\end{abstract}

\section{Introduction}
A major resource for many quantum tasks are EPR-pairs. In contrast, this note quantifies three party entanglement using different concepts in quantum information theory. Three party entanglement is even more complex than two party entanglement. A major obstacle in understanding three party entanglement is the absence of a suitable analogue of entanglement entropy. The analogue of the maximally entangled state, the GHZ state, is insufficient in producing unbounded multipliciative violations of high dimensional Bell inequalities \cite{JPPV}.
Tripartite information was first introduced as ``topological entropy" in \cite{Kitaev2006} to characterize multi-party entanglement in a topologically ordered system. Tripartite information $I_3$  and out-of-order-correlations are used to measure delocalization of information in the bulk-boundary picture of the AdS/CFT correspondence, see e.g. \cite{HayWal}, and see \cite{Net} for a connection to  neural networks, and e.g. \cite{Schna} for further information.



In this paper we will develop a connection between tripartite information $I_3$ and secret sharing protocols. Tripartite information is defined for any tripartite state as
 \begin{align*}
 I_3(P_1:P_2:P_3) &= S(P_1)+S(P_2)+S(P_3)-S(P_1P_2)-S(P_1P_3)-S(P_2P_3)+S(P_1P_2P_3)
 \end{align*}
 and enjoys many symmetry properties. As pointed out in \cite{Yoshida}, the notion $I_3$ measures indeed entanglement for the four partite pure state  $\ket{\phi}^{RP_1P_2P_3}$ given by the purification of the density matrix $\rho^{P_1P_2P_3}$. We refer to \cite{Yoshida} for the rich symmetry properties of $I_3$. In particular, we recall that
 \begin{align*}\label{pp}
 I_3(P_1:P_2:P_3) &=  I(R,P_1)+I(R,P_2)-I(R,P_1P_2)\\
 &= -2S(R)+ I(R,P_1)+I(R,P_2)+I(R,P_3) \pll .\end{align*}
As it is the case for conditional entropy,
$I_3$  may have positive and negative values. A negative or even strictly negative value is an indication of existing entanglement. In \cite{HayWal} it is shown that under the premisses of famous Hayden-Preskill Gedankenexperiment $I_3$ is always strictly negative. The setting in this paper, following \cite{Yoshida} is more general.
In \cite{Ha1} Harlow analyzes the situation of two finite dimensional von Neumann algebras $M,M'\subset \mathbb{B}(H_{code})$ which are sent to two different registers
 \[ H_{code}\subset H\bar{\ten} H' \]
such that
\begin{align} \label{Har}
 D(\rho^M||\si^{M})\lel D(\rho^H||\si^H)
\quad \mbox{and} \quad D(\rho^{M'}||\si^{M'})\lel D(\rho^{H'}||\rho^{H'})
 \end{align}
holds for all densities in a code space. Harlow shows the equivalence of simultaneous relative entropy recovery and the existence of a certain pair of unitaries, and a state measuring entanglement threads.

In quantum information theory the equality case in entropic inequalities often occurs under specific algebraic requirements. Therefore it is natural to ask for equality in the obvious lower bound
    \[ -2S(R) \kl   I_3(P_1:P_2:P_3)  \pl . \]
The bound follows easily from the positivity  of mutual information, see e.g. \cite{Wilde}.
 As pointed out by \cite{Yoshida}, see also \cite{preskill}, random unitaries and perfect tensor almost achieve equality of $I_3$ in many cases, when applied to maximally entangled states between $M$ and $M'$. Let us recall the famous Hayden-Preskill Gedankenexperiment where Alice's secret can be recovered from the gamma radiation of a black hole which has been observed for a long time.  This leads to an interpretation of scrambling in terms of error correction and decoding, closely connected to the powerful tool of decoupling, \cite{preskill}.

The link to a secret sharing protocol follows from Harlow's setup considering
a subspace
 \[ \S\subset P_1P_2P_3 \pll .\]
By fixing a basis $|\tilde{i}\ran$ the Referee may send a signal from his register $R$ of the same dimension $r$ by constructing the purification
$$\ket{\phi}^{RP_1P_2P_3}=\sum_{i=1}^r\sqrt{\lambda_i}\ket{\psi_i}^R\ket{\wt{\psi_i}}^{P_1P_2P_3}$$
of any density matrix $\trho=\sum_{i}\lambda_i\op{\wt{\psi_i}}{\wt{\psi_i}}$ in $\S$.

We will say that fixing a basis $\{|\tilde{i}\ran\}_{i=1}^r$  defines a \emph{quantum secret sharing scheme}, if a referee can send a secret state to a collection of untrusted parties such that only authorized subsets of the parties can reconstruct the secret perfectly while those that are unauthorized gain zero information. Quantum secret sharing has been a long-established topic in quantum information theory \cite{Lo}. It has a variety of applications such as quantum money and quantum resource distribution, to name a few \cite{Lo}. Quantum secret
sharing protocols can also be defined in terms of the mutual information among subsets of the given collection of parties and the ancillary party \cite{Bai}. In our situation minimality of $I_3$ is equivalent to so-called $((2,3))$ threshold schemes, i.e. the secret is sent to three parties, and all pair of them can perfectly recover the secret, but no single party can. In the following text, we will use the terms $((2,3))$ threshold scheme and secret sharing scheme interchangeably without ambiguity.

We observe that there is a one-to-one correspondence between minimality of $I_3$ and the $((2,3))$ secret sharing schemes. Moreover, if the code space is fixed with the same dimension:  \[ d\lel |S|\lel |P_1| \lel |P_2| \lel |P_3| \pl ,\]
then they admit a third, equivalent formulation using multi-unitaries:  \\
\begin{center}

  \setlength{\unitlength}{0.9pt}
  \begin{picture}(260,100)
  \put(80,100){Minimal $I_3$} \put(25,50){Multi-} \put(155,45){((2,3)) secret sharing}
  \put(25,40){unitary}
  \put(55,60){\vector(1,1){35}} \put(93,93){\vector(-1,-1){35}}
  \put(155,60){\vector(-1,1){35}} \put(128,93){\vector(1,-1){35}}
  \put(60,55){\vector(1,0){90}} \put(150,50){\vector(-1,0){90}}
  \end{picture}
\end{center}

If we choose a basis, the four leg tensor $t$ is given by
 \begin{equation}\label{tensorcon}
|\tilde{i}\ran \lel \sum_{s_1,s_2,s_3} \frac{1}{\sqrt{d}} t_{is_1s_2s_3} |s_1s_2s_3\ran \pl.
\end{equation}
 The multi-unitary condition then refers to the condition that for all three choices $a=1,2,3$, we have that $t:=\sum_{i,s_1,s_2,s_3}t_{is_1s_2s_3}\op{s_bs_c}{is_a}$ becomes a unitary. We refer to section 2 for details.

Using the predicted relation between sharing schemes and $I_3$ one can also produce new examples with small, but not necessarily minimal values, of $I_3$, as in Page scrambling.  Indeed, for a unitary  $u:RP_1\to P_2P_3$, and we may define the tensor $t_{i,s_1,s_2,s_3}=u_{(is_1),(s_2s_3)}$ and then estimate tripartite information:

\begin{theorem} Let $u:RP_1\to P_2P_2$ be a random unitary. Then with probability $1-\delta$
 \[ -2S(R)\kl I_3(P_1P_2P_3)\kl -2S(R)+C(\delta) \]
holds for \emph{all} purified input states $\ket{\phi}$.
\end{theorem}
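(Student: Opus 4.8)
The plan is to reduce the statement about all purified inputs to a statement about the maximally mixed reference state plus a decoupling estimate, in the spirit of the Hayden--Preskill analysis referenced in the excerpt. Write $\ket{\phi}^{RP_1P_2P_3}$ for the purification of an arbitrary density matrix on the code space $\S$; using the identity from the introduction,
\[
I_3(P_1:P_2:P_3) = -2S(R) + I(R:P_1) + I(R:P_2) + I(R:P_3),
\]
so it suffices to show that $I(R:P_1)+I(R:P_2)+I(R:P_3) \le C(\delta)$ with probability $1-\delta$, uniformly over inputs. By the symmetry $R \leftrightarrow P_1$ built into the construction $t_{i s_1 s_2 s_3} = u_{(is_1),(s_2 s_3)}$ (the map $u: RP_1 \to P_2P_3$ is unitary), the pair $(R,P_1)$ is on the same footing as $(P_2,P_3)$; the genuinely nontrivial terms are $I(R:P_2)$ and $I(R:P_3)$, and again by symmetry between $P_2$ and $P_3$ it is enough to bound $I(R:P_2)$.

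First I would push the input-state dependence into a purifying system. Let $\ket{\phi}$ be the purification of $\trho$ on $\S$; introduce an auxiliary copy so that $\ket{\phi}^{R P_1 P_2 P_3}$ arises from feeding a maximally entangled state on $R R'$ through the isometry defining $\S$ followed by $u$. Then $I(R:P_2)$ for a fixed input is controlled by $I(R R' : P_2)$ computed on the state where $RR'$ is maximally entangled, because partial-tracing / data processing only decreases mutual information and any input $\trho$ is obtained by applying a channel on $R'$. This is the standard ``it suffices to decouple the maximally mixed state'' reduction, and it is the step that converts ``for all $\ket{\phi}$'' into a single statement. Next, apply the decoupling theorem: for a Haar-random $u: RP_1 \to P_2 P_3$, the reduced state on $R$ (equivalently $RR'$) together with $P_2$ decouples, i.e.
\[
\ew_u \big\| \rho^{R P_2} - \rho^{R} \otimes \tfrac{\openone_{P_2}}{|P_2|} \big\|_1 \le \sqrt{\frac{|P_2|\,2^{-S_2(\rho^{R})}}{1}}
\]
up to the relevant dimension factors (here one wants the dimension of $P_2$ to be comparable to that of $R$, which is exactly the balanced regime $d = |S| = |P_1| = |P_2| = |P_3|$ in the theorem). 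Then Markov's inequality upgrades the expectation bound to a high-probability bound, and Fannes--Audenaert (continuity of entropy) converts the trace-norm smallness of $\rho^{RP_2} - \rho^R \otimes \openone/|P_2|$ into a bound $I(R:P_2) \le C(\delta)$ with the constant depending on $\delta$ through the Markov step.

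I expect the main obstacle to be the uniformity over all input states $\ket{\phi}$ simultaneously, rather than for one fixed input. Standard decoupling gives, for each fixed state, a bound that holds with high probability over $u$; but the theorem asks for a single random $u$ that works for all inputs at once. The fix is a net argument: the input states on the $d$-dimensional code space form a compact set, one takes an $\eps$-net of size $(C/\eps)^{O(d^2)}$, applies the decoupling bound with failure probability $\delta$ rescaled by the net cardinality (a union bound), and then uses Lipschitz continuity of $\ket{\phi} \mapsto I(R:P_j)$ to interpolate to all states. This forces $C(\delta)$ to absorb a $\log(1/\eps) + d^2$-type loss, but it is still a finite constant depending only on $\delta$ (and the fixed dimension $d$), which is all the statement claims. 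The remaining steps --- verifying the symmetry reductions, invoking decoupling in the balanced dimension regime, and the continuity estimates --- are routine given the machinery cited in the introduction.
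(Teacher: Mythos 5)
Your overall decomposition $I_3=-2S(R)+\sum_a I(R:P_a)$ and the observation that unitarity of $u:RP_1\to P_2P_3$ kills the $I(R:P_1)$ term match the paper, but the core of your argument has a genuine gap, in fact two. First, the reduction ``any input $\trho$ is obtained by applying a channel on $R'$ to the maximally entangled state, so data processing bounds $I(R:P_2)$ by its value on the maximally mixed input'' is false: a general purification $\ket{\phi}\propto(\sqrt{d\,\trho^{T}}\otimes \id)\ket{\Phi}$ arises by local \emph{filtering} (a single non-unitary Kraus operator followed by renormalization), not by a CPTP map, and mutual information is not monotone under such post-selection --- it can increase substantially. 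Second, even for the maximally mixed input your quantitative route cannot produce the claimed bound: in the balanced regime $|R|=|P_1|=|P_2|=|P_3|=d$ the traced-out system $P_3$ is not larger than $RP_2$, so the one-shot decoupling bound gives a trace distance of order one (indeed your displayed inequality evaluates to $\sqrt{d\cdot d^{-1}}=1$), and Fannes--Audenaert then yields only $I(R:P_2)\lesssim \log d$. This is exactly the regime (flagged in the paper via Kitaev's remark) where one expects an additive \emph{constant}, and trace-norm decoupling plus entropy continuity is structurally too weak to see it. Your fallback --- an $\eps$-net with a union bound, conceding that $C(\delta)$ may depend on $d$ --- trivializes the statement, since $I(R:P_a)\le 2\log d$ holds for every unitary with probability one; the whole point of the theorem, stated explicitly in the surrounding text, is that $C(\delta)$ is dimension-free.

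The paper's proof avoids all of this by never working at the level of states. Lemma \ref{tensor3} shows that for \emph{every} $\trho\in\S$ one has $I(R:P_a)=D(\rho^{RP_a}\|\rho^R\otimes\rho^{P_a})\le D_\infty(\rho^{RP_a}\|\rho^R\otimes\rho^{P_a})\le 2\log\|t^{R'P_a'\to P_bP_c}\|_\infty$, because $\rho^{RP_a}\le \|t\|^2\,\rho^R\otimes\rho^{P_a}$ as operators; the bound depends only on the operator norms of the three reshapings of $u$, so uniformity over all purified inputs is automatic and costs nothing. The probabilistic work (Theorem 4.2 and its corollary) is then a Marcus--Pisier/Chevet comparison between Haar unitaries and Gaussian matrices for the seminorm $\max\{\|\cdot\|_\infty,\|\cdot^{R}\|_\infty,\|\cdot^{\Gamma}\|_\infty\}$, showing these norms are $O(\sqrt{\log(1/\delta)})$ with probability $1-\delta$, independently of $d$. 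If you want to salvage your approach, the fix is to replace ordinary mutual information and trace-norm decoupling by the max-relative-entropy / operator-norm bound --- at which point you have reproduced the paper's argument.
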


Previous result were usually restricted to a maximally entangled state $\ket{\phi}$ and did not provide good enough concentration of measure to work for all densities simultaneously. Although this  estimate is not as tight as Page's original estimate it adds a concrete dimension free relation between $\delta$ and $C(\delta)$. As observed by Kitaeev an additive factor is expected as long as $|P_2|=|P_3|$.

The paper is organized as follows. After some preliminaries in section 1, the equivalence of the three conditions and the connection to Uhlmann's theorem is presented in section 2. In section 3, we study conditions satisfying the minimality of $I_3$ up to a constant term, and in section 4, this is shown to be generic for random matrices provided $|S|=|P_j|$. Section 5 provides concrete examples of perfect secret sharing schemes that work for all dimension, and also an example of imperfect secret sharing schemes that always requires a fixed party to be present to recover a secret.

\section{Equivalence Conditions for Perfect Secret Sharing Schemes}
As mentioned in the introduction, the ((2,3)) threshold scheme means that a referee sends a secret to three parties, and any two parties can recover the secret; however, any single party alone is forbidden from decoding the secret. This ability to recover from the erasure of one party has been well studied in the context of quantum error correction \cite{Schumacher}. In fact, Harlow et al.\cite{Harlow} has shown, using Uhlmann's theorem, that for a $\S\subset P_1P_2P_3$ spanned by $\{\ti\}_{i=1}^{r}$, and for any party $P_a$ to be erased, the following are equivalent:
\begin{itemize}
    \item[(i)]  $I(R, P_a)=0$ for the state $\ket{\phi}=1/\sqrt{r}\sum_{i}\ket{i}^R\ket{\wt{i}}^{P_1P_2P_3}$,
    \item[(ii)] there exists a unitary $U_{a}:RP_a'\rightarrow P_bP_c$ such that for any $i=1,2,\dots,r$, \begin{align}\label{eqn:uni}
    \ti^{P_aP_bP_c}
    =\id^{P_a}\otimes U_{a}\ket{i}^R\ket{\chi_a}^{P_a'P_a},
\end{align}
where $\ket{\chi_a}^{P_a'P_a}$ does not depend on the choice of $i$, and the prime denotes an ancillary copy of the party.
\end{itemize}
We should make a remark that (ii) is the condition for two parties $P_bP_c$ to be able to recover any state sent by referee $R$, using unitary transformations acting only on their parties.  Indeed, by applying the unitary mapping $\id^{P_a}\otimes U_{a}^{\dagger P_bP_c\rightarrow RP_a'}$ on $P_bP_c$, we can actually recover any density matrix in $\S$, i.e., a matrix of the form $\trho^{P_aP_bP_c}=\sum_{ij}\rho_{ij}\op{\wt{i}}{\wt{j}}$ will be mapped to $\rho^R\otimes \chi_a^{P_a'P_a}=\sum_{ij}\rho_{ij}\op{i}{j}^R\otimes\op{\chi_a}{\chi_a}^{P_a'P_a}$. Thus the message  $\rho^R$ sent by the referee is recovered.
\par
Since the secret sharing scheme requires the recoverability against the erasure of any party $P_1$, $P_2$ or $P_3$, we have $$I(R, P_1)=I(R,P_2)=I(R,P_3)=0.$$
Recall that this is exactly the necessary and sufficient condition to have minimal $I_3$:\begin{align*}
    I_3&=-2S(R)+I(R, P_1)+I(R,P_2)+I(R,P_3)\\
    &\geq -2S(R),
\end{align*}
with equality obtain if and only if $I(R, P_j)=0$ for all $j=1,2,3$. Therefore, we obtain two equivalent definitions of a ((2,3)) threshold scheme: \begin{definition}\label{def:secretsharing} A code space $\S\subset P_1P_2P_3$ spanned by $\{\ti\}_{i=1}^{r}$ is a ((2,3)) threshold scheme if either
\begin{itemize}
    \item[{ (a)}] $I_3=-2S(R)$ for the state $\ket{\phi}=1/\sqrt{r}\sum_{i}\ket{i}^R\ket{\wt{i}}^{P_1P_2P_3}$,
    \item[{ or (b)}] There exist three unitary maps, $U_{a}:RP_a'\rightarrow P_bP_c$, $a=1,2,3$ that all satisfy (\ref{eqn:uni}).
\end{itemize}
\end{definition}
In contrast to the traditional definition of secret sharing schemes \cite{Bai}, which only uses mutual information, we define it using $I_3$ because it quantifies imperfection for sharing schemes. See section 3 for details.\par
We now discuss another equivalent definition for a perfect secret sharing scheme in the special case when $|R|=|P_1|=|P_2|=|P_1|=d$. Let us now define the secret subspace $\S\subset P_1P_2P_3$ by fixing its basis as \begin{align}\label{eqn:tensor-rep}
    \ket{\wt{i}}=\sum_{s_1,s_2,s_3=1}^{d} \frac{1}{\sqrt{d}} t_{is_1s_2s_3} \ket{s_1s_2s_3}^{P_1P_2P_3}\text{, }i=1,2,\dots,d.
\end{align}
\begin{theorem}\label{lemma:m-u}
$\S$ is a ((2,3)) threshold scheme if and only if $t_{is_1s_2s_3}$ is multi-unitary \cite{multiunitary}:\begin{itemize}
    \item[1.] the map $t:=\sum_{i,s_1,s_2,s_3}t_{is_1s_2s_3}\op{s_2s_3}{is_1}$ is unitary,
    \item[2.] its reshuffling, $t^R:=\sum_{i,s_1,s_2,s_3}t_{is_1s_2s_3}\op{s_1s_3}{is_2}$ is unitary,  and
\item[3.] its partial transposition (followed by a flip), $t^{\Gamma}:=\sum_{i,s_1,s_2,s_3}t_{is_1s_2s_3}\op{s_1s_2}{is_3}$ is unitary.
\end{itemize}
Moreover, the error-correcting unitaries $U_1$, $U_2$ and $U_3$ are uniquely given, up to a local unitary, by \begin{align}\label{eqn:get-u}
\begin{aligned}
&U_1=\sum_{i,s_1,s_2,s_3}t_{is_1s_2s_3}\ket{s_2s_3}^{P_2P_3}\bra{is_1}^{RP_1'},\\
    &U_2=\sum_{i,s_1,s_2,s_3}t_{is_1s_2s_3}\ket{s_1s_3}^{P_1P_3}\bra{is_2}^{RP_2'}\text{, and}\\
    &U_3=\sum_{i,s_1,s_2,s_3}t_{is_1s_2s_3}\ket{s_1s_2}^{P_1P_2}\bra{is_3}^{RP_3'}.
\end{aligned}
\end{align}
\end{theorem}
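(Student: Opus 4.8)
The plan is to establish the biconditional by connecting the multi-unitary conditions 1--3 directly to the three unitary recovery maps $U_1, U_2, U_3$ of Definition~\ref{def:secretsharing}(b), using the equivalence (i)$\Leftrightarrow$(ii) from Harlow et al.\ quoted above. Since $|R|=|P_1|=|P_2|=|P_3|=d$, in the erasure-of-$P_a$ picture the ancilla $P_a'$ also has dimension $d$, so $U_a\colon RP_a'\to P_bP_c$ is a map between spaces of equal dimension $d^2$; hence \eqref{eqn:uni} with a fixed $\ket{\chi_a}$ that we may take to be the standard maximally entangled vector $\ket{\chi_a}=\frac{1}{\sqrt d}\sum_{s_a}\ket{s_a}^{P_a'}\ket{s_a}^{P_a}$. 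Plugging the tensor representation \eqref{eqn:tensor-rep} into \eqref{eqn:uni} and matching coefficients in the basis $\ket{s_1s_2s_3}$ will show that \eqref{eqn:uni} holds for some unitary $U_a$ if and only if the corresponding map in \eqref{eqn:get-u} is unitary, and that this $U_a$ is then forced to be exactly the displayed one (up to the local unitary freedom coming from the choice of $\ket{\chi_a}$). This is the computational heart and should be carried out once, say for $a=1$; the cases $a=2,3$ are identical after relabeling, which is precisely why conditions 2 and 3 are the ``reshuffling'' and ``partial transpose + flip'' of condition 1.

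In detail, for $a=1$ write $U_1=\sum t_{is_1s_2s_3}\ket{s_2s_3}\bra{is_1}$ as an ansatz and compute $\id^{P_1}\otimes U_1\,\ket{i}^R\ket{\chi_1}^{P_1'P_1}$. Since $\ket{\chi_1}=\frac1{\sqrt d}\sum_{s_1}\ket{s_1}^{P_1'}\ket{s_1}^{P_1}$, applying $U_1$ to the $RP_1'$ legs gives $\frac1{\sqrt d}\sum_{s_1,s_2,s_3} t_{is_1s_2s_3}\ket{s_1}^{P_1}\ket{s_2s_3}^{P_2P_3}$, which is exactly $\ket{\wt i}$ by \eqref{eqn:tensor-rep}. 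So \emph{if} this $U_1$ is unitary, then condition (ii) of the Harlow equivalence holds, hence $I(R,P_1)=0$. Conversely, the $(d\times d^2)$-type reshaping identity $t\mapsto t^{(1)}:=\sum t_{is_1s_2s_3}\ket{s_2s_3}\bra{is_1}$ satisfies $(t^{(1)})(t^{(1)})^\dagger=\id^{P_2P_3}$ iff $\sum_{i,s_1} t_{is_1s_2s_3}\overline{t_{is_1s_2's_3'}}=\delta_{s_2s_2'}\delta_{s_3s_3'}$, and unpacking $I(R,P_1)=0$ via the reduced states of $\ket\phi$ gives precisely this orthogonality relation (together with $\Tr$-normalization forcing the matrix to be square-unitary rather than merely an isometry, which is where $|R|=|P_j|$ is used). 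Then repeat verbatim for $P_2$ (giving $t^R$) and $P_3$ (giving $t^\Gamma$).

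Finally, assemble: by Definition~\ref{def:secretsharing}, $\S$ is a $((2,3))$ threshold scheme $\iff$ there exist unitaries $U_1,U_2,U_3$ satisfying \eqref{eqn:uni} $\iff$ $I(R,P_1)=I(R,P_2)=I(R,P_3)=0$ $\iff$ (by the three coefficient computations) the maps $t, t^R, t^\Gamma$ of \eqref{eqn:get-u}'s adjoints are all unitary $\iff$ $t$ is multi-unitary in the sense of conditions 1--3. The uniqueness clause follows because \eqref{eqn:uni} determines $U_a$ on the subspace $\spann\{\ket{i}^R\ket{\chi_a}^{P_a'P_a}\}$, and two unitaries agreeing there differ by a unitary supported on the orthogonal complement, which (after tracing through the Schmidt structure of $\ket{\chi_a}$) amounts exactly to a local unitary on $P_a'$; changing $\ket{\chi_a}$ itself is absorbed the same way.

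The main obstacle I anticipate is bookkeeping rather than conceptual: keeping the index placement straight across the three cases so that conditions 2 and 3 come out as the stated reshuffling and partial-transpose-and-flip (it is easy to produce an off-by-a-permutation version), and being careful that $I(R,P_a)=0$ yields a genuine unitary — one must invoke $|R|=|P_b||P_c|/|P_a| = d$, i.e.\ the equal-dimension hypothesis, to upgrade the isometry $t^{(a)}$ to a unitary and to know the ancilla $P_a'$ has the right size for $U_a$ to be a square unitary.
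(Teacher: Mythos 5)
Your overall architecture is the same as the paper's: reduce the $((2,3))$ property to the existence of the three recovery unitaries $U_a$ of Definition~\ref{def:secretsharing}(b), then pin each $U_a$ down by matching coefficients against $\ket{i}^R\ket{\chi_a}^{P_a'P_a}$. The ``if'' direction and the uniqueness clause are essentially fine. But the ``only if'' direction has a genuine gap, in two related places. First, you declare that one ``may take'' $\ket{\chi_a}$ to be the standard maximally entangled vector. That is only legitimate once you know the single-party marginal $\trho^{P_a}$ is maximally mixed, since $\ket{\chi_a}$ is constrained to purify it; this is not automatic and is exactly what the paper establishes before the proof (Proposition~\ref{prop:alli3} plus the subsequent proposition), by combining $S(P_bP_c)=S(R)+S(P_a)$ for all three erasures with $S(P_a)\le\log d$ --- i.e.\ it genuinely needs all three erasure conditions at once, not the equal-dimension hypothesis alone.

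Second, and more seriously, the claim that ``unpacking $I(R,P_1)=0$ via the reduced states of $\ket{\phi}$ gives precisely the orthogonality relation $\sum_{i,s_1}t_{is_1s_2s_3}\overline{t_{is_1s_2's_3'}}=\delta_{s_2s_2'}\delta_{s_3s_3'}$'' is false as stated. The condition $I(R,P_1)=0$ only yields $\rho^{RP_1}=\tfrac{1}{d}\id^R\otimes\rho^{P_1}$, equivalently $t^{\dagger}t=\id_R\otimes N$ with $N=d\,\rho^{P_1}$ a positive matrix of trace $d$; unitarity requires $N=\id$, which is again the maximally-mixed-marginal statement and does \emph{not} follow from the single condition $I(R,P_1)=0$ (consider $\ket{\wt i}=\ket{1}^{P_1}\ket{\psi_i}^{P_2P_3}$ with orthonormal $\ket{\psi_i}$). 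Your parenthetical about ``$\Tr$-normalization forcing square-unitary rather than merely an isometry'' misdiagnoses the obstruction: the map is already square, and the issue is that its Gram matrix could be $\id\otimes N$ rather than $\id$. Once the maximally-mixed-marginal lemma is inserted your argument closes; note that the paper avoids this reduced-density computation for conditions 2 and 3 altogether by composing $U_2^{\dagger}U_1$ on $\ket{i}\ket{\chi_1}$ and reading off $g^{\dagger}t^{R}=\id$, so that $t^{R}=g$ is unitary because $g$ already is.
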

Before proving Theorem \ref{lemma:m-u}, we first show some properties of ((2,3)) sharing schemes.
\begin{prop}\label{prop:alli3}
Let $\S$ be a ((2,3)) threshold scheme, then for any $\trho^{P_1P_2P_3}
\in \S$, its reduced densities $\trho^{P_a}$ is independent of the choice of $\trho^{P_1P_2P_3}$, and for any $\trho^{P_1P_2P_3}\in\S$, $I_3=-2S(R)$.
\end{prop}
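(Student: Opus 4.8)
The plan is to use the characterization in Definition~\ref{def:secretsharing}(b): a $((2,3))$ threshold scheme provides unitaries $U_a:RP_a'\to P_bP_c$ satisfying \eqref{eqn:uni}, namely $\ket{\wt i}^{P_aP_bP_c}=\id^{P_a}\otimes U_a\,\ket{i}^R\ket{\chi_a}^{P_a'P_a}$ with $\ket{\chi_a}$ independent of $i$. I would first establish the statement about the reduced densities $\trho^{P_a}$. Take any $\trho^{P_1P_2P_3}=\sum_{ij}\rho_{ij}\op{\wt i}{\wt j}\in\S$. Substituting \eqref{eqn:uni} for both $\ket{\wt i}$ and $\bra{\wt j}$, and using that $U_a$ acts only on $R P_a'$ (leaving the $P_a$ slot untouched), one computes $\trho^{P_aP_bP_c}=\id^{P_a}\otimes U_a\big(\sum_{ij}\rho_{ij}\op{i}{j}^R\otimes\op{\chi_a}{\chi_a}^{P_a'P_a}\big)\id^{P_a}\otimes U_a^\dagger$. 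Now trace out $P_bP_c$: since $U_a$ is a unitary from $RP_a'$ onto $P_bP_c$, tracing over $P_bP_c$ is the same as tracing over $RP_a'$ (after applying $U_a^\dagger$), which kills the $R$ factor (it has trace $\Tr\rho=1$, assuming $\trho$ normalized, or trace $\Tr\rho$ in general) and the $P_a'$ factor, leaving $\trho^{P_a}=(\Tr\rho)\,\chi_a^{P_a}$ where $\chi_a^{P_a}=\Tr_{P_a'}\op{\chi_a}{\chi_a}$. This is manifestly independent of the $\rho_{ij}$, proving the first claim. (One should remark that the marginal only depends on $\trho$ through its overall trace, hence is fixed once we restrict to density matrices.)

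For the second claim, that $I_3=-2S(R)$ for \emph{any} $\trho^{P_1P_2P_3}\in\S$ — not merely the flat state $\ket\phi$ used in part (a) — I would run the same erasure argument at the level of entropies. Fix a density $\trho\in\S$ and let $\ket{\phi_{\trho}}^{RP_1P_2P_3}$ be its purification as in the introduction, $\ket{\phi_\trho}=\sum_i\sqrt{\lambda_i}\ket{\psi_i}^R\ket{\wt{\psi_i}}^{P_1P_2P_3}$, where $\ket{\wt{\psi_i}}\in\S$. Because $\S$ is a $((2,3))$ scheme, the unitary recovery \eqref{eqn:uni} holds for every vector in $\S$ (it is a linear statement in the basis $\ket{\wt i}$, so it extends to all of $\S$). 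Hence for each $a$, the state $\ket{\phi_\trho}$ has the form $\id^{P_a}\otimes U_a$ applied to $\big(\sum_i\sqrt{\lambda_i}\ket{\psi_i}^R\ket i^R\big)\otimes\ket{\chi_a}^{P_a'P_a}$ — wait, more carefully: $\ket{\phi_\trho}=\id^{P_a}\otimes U_a\,(\sum_i\sqrt{\lambda_i}\ket{\psi_i}^{R}\ket{i}^{R}\ket{\chi_a}^{P_a'P_a})$, where the first $R$ is the purifying register and the second $R$ is the one $U_a$ acts on. The key point is that on the subsystem $RP_a$ (purifier together with the erased party), the state is a product: $\sigma^R_{\trho}\otimes\chi_a^{P_a}$, because $U_a$ touches neither the purifier $R$ nor the slot $P_a$. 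Therefore $I(R,P_a)=S(R)+S(P_a)-S(RP_a)=0$ for all three $a$. Plugging into the identity $I_3=-2S(R)+I(R,P_1)+I(R,P_2)+I(R,P_3)$ recalled just above Definition~\ref{def:secretsharing} gives $I_3=-2S(R)$.

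I expect the only delicate point to be bookkeeping with the two copies of the register $R$ (the purifying reference versus the input register on which $U_a$ is defined) and making sure the ancilla $P_a'$ and the physical $P_a$ are correctly tracked through $U_a$; everything else is a direct substitution of \eqref{eqn:uni} and the unitary invariance of the trace and of von Neumann entropy. One subtlety worth flagging explicitly: the claim "$I_3=-2S(R)$ for any $\trho\in\S$" already presupposes a fixed reference register and purification convention, and the value $S(R)$ itself does depend on $\trho$ (it equals the entropy of the message $\rho^R$ sent by the referee); what is uniform over $\S$ is the \emph{equality} $I_3+2S(R)=0$, equivalently the vanishing of all three $I(R,P_a)$. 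I would state the proposition's proof in that light, first proving the marginal-independence claim, then deducing $I(R,P_a)=0$ for every $a$ and every purified input, and concluding via the $I_3$ identity. If desired, the marginal-independence claim can also be derived \emph{from} $I(R,P_a)=0$: vanishing mutual information forces $\trho^{RP_a}=\trho^R\otimes\trho^{P_a}$, and since the global state on $RP_1P_2P_3$ is pure, $\trho^{P_a}$ is determined up to unitary equivalence by $\trho^{RP_bP_c}$; but I find the direct substitution cleaner and it also pins down $\trho^{P_a}=\chi_a^{P_a}$ on the nose.
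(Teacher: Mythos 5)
Your proposal is correct and follows essentially the same route as the paper: both use Definition~\ref{def:secretsharing}(b) to write $\trho^{P_aP_bP_c}$ as $U_a\otimes\id^{P_a}$ conjugating a product state, then trace out to get the $\trho$-independent marginal $\trho^{P_a}=\Tr_{P_a'}\op{\chi_a}{\chi_a}$ and the factorization that forces $I_3=-2S(R)$. The only (immaterial) difference is that the paper concludes via $S(P_bP_c)=S(R)+S(P_a)$ substituted into the seven-term definition of $I_3$, whereas you conclude via $I(R,P_a)=0$ and the identity $I_3=-2S(R)+\sum_a I(R,P_a)$ — two equivalent bookkeepings given purity of the global state.
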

\begin{proof}
Suppose $\trho=\sum_{ij}\rho_{ij}\op{\tilde{i}}{\tilde{j}}$. Let $P_a$ denote any party, and $P_b$, $P_c$ the remaining parties. By definition (b), for some unitary $U_a:RP_a'\rightarrow P_bP_c$\begin{align}
    &\trho^{P_aP_bP_c}=
    U_a\otimes\id^{P_a}\left(
    \sum_{ij}\rho_{ij}\op{i}{j}^{R}\otimes
    \op{\chi_a}{\chi_a}^{P_a'P_a}
    \right)
    U_a^\dagger \otimes\id^{P_a}.
\end{align}
If we take the partial trace over $P_b$ and $P_c$, we get
\begin{align}
     \trho^{P_a}=\Tr_{P_bP_c}(\trho^{P_aP_bP_c})=\Tr_{P_a'}\left(\op{\chi_a}{\chi_a}^{P_a'P_a} \right),
\end{align}
which is independent of the choice of $\trho$. If we take partial trace over $P_a$, we get
\begin{align}
    \trho^{P_bP_c}&=\Tr_{P_a}(\trho^{P_aP_bP_c})=
    U_a\left(
    \sum_{ij}\rho_{ij}\op{i}{j}^{R}\otimes
    \Tr_{P_a}\left(\op{\chi_a}{\chi_a}^{P_a'P_a}\right)
    \right)U_a^\dagger .
\end{align}
Since von Neumann entropy is invariant under unitary, we have \begin{align}
    S(P_bP_c)=S\left(\sum_{ij}\rho_{ij}\op{i}{j}\right)+S\left(\Tr_{P_a}\op{\chi_a}{\chi_a}^{P_a'P_a}\right)=S(R)+S(P_a).
\end{align}
Substituting this into the expression of $I_3$, we get $I_3=-2S(R)$.
\end{proof}
\begin{corollary}\label{cor:one->all}
$I_3=-2S(R)$ for the state $\ket{\phi}=1/\sqrt{r}\sum_{i}\ket{i}^R\ket{\wt{i}}^{P_1P_2P_3}$ implies $I_3=-2S(R)$ for all $\trho\in\S$.
\end{corollary}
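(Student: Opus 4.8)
The plan is to read the Corollary off from Proposition \ref{prop:alli3} by recognizing its hypothesis as one of the two equivalent forms of the defining condition for a $((2,3))$ threshold scheme. First I would note that ``$I_3=-2S(R)$ for $\ket{\phi}=1/\sqrt{r}\sum_i\ket{i}^R\ket{\wt{i}}^{P_1P_2P_3}$'' is exactly condition (a) of Definition \ref{def:secretsharing}. Then, by the decomposition $I_3=-2S(R)+I(R,P_1)+I(R,P_2)+I(R,P_3)$ together with positivity of mutual information, the equality $I_3=-2S(R)$ forces $I(R,P_a)=0$ for each $a=1,2,3$ individually; applying the Harlow--Uhlmann equivalence (i)$\,\Leftrightarrow\,$(ii) to each party $P_a$ separately then produces a unitary $U_a:RP_a'\to P_bP_c$ satisfying (\ref{eqn:uni}). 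Hence condition (b) of Definition \ref{def:secretsharing} holds, i.e.\ $\S$ is a $((2,3))$ threshold scheme in the sense required by Proposition \ref{prop:alli3}.

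Second, I would simply invoke Proposition \ref{prop:alli3}: its hypothesis is precisely that $\S$ is a $((2,3))$ threshold scheme (equivalently, that the three recovery unitaries $U_a$ exist), and its conclusion is that $I_3=-2S(R)$ for every $\trho^{P_1P_2P_3}\in\S$ --- which is exactly the claim of the Corollary. No additional computation is needed beyond what Proposition \ref{prop:alli3} already carries out.

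The only delicate point --- and the nearest thing to an obstacle --- is the passage from a statement about the single maximally entangled state $\ket{\phi}$ to a statement about an arbitrary $\trho\in\S$. This is where the structural content (the existence of the party-wise recovery unitaries, resting ultimately on Uhlmann's theorem via the cited equivalence (i)$\,\Leftrightarrow\,$(ii)) does the real work: it upgrades a spectral property of one distinguished state to an operational property of the whole code space, after which Proposition \ref{prop:alli3} propagates $I_3=-2S(R)$ to all states in $\S$.
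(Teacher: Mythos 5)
Your proposal is correct and follows exactly the route the paper intends: the hypothesis is condition (a) of Definition \ref{def:secretsharing}, which via the decomposition $I_3=-2S(R)+\sum_a I(R,P_a)$ and the Harlow--Uhlmann equivalence yields condition (b), whereupon Proposition \ref{prop:alli3} gives the conclusion for every $\trho\in\S$. This is essentially the same argument the paper relies on in stating the result as an immediate corollary of Proposition \ref{prop:alli3}.
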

The above lemma does not assume the parties to have the same dimension. If we add this assumption, we have
\begin{prop}
Let $\S$ be a ((2,3)) threshold scheme and assume that $|R|=|P_j|=d$ for $j=1,2,3$, then for any $\trho^{P_1P_2P_3}\in \S$, its reduced densities for any party must be maximally mixed, ie. for any $P_a$, $\trho^{P_a}=\Tr_{P_bP_c}\trho^{P_aP_bP_c}=\sum_s \frac{1}{d}\op{s}{s}^{P_a}$.
\end{prop}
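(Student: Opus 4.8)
The plan is to reduce everything, via Proposition \ref{prop:alli3}, to a single convenient state and then run a short entropy count. Since Proposition \ref{prop:alli3} already guarantees that $\trho^{P_a}$ is the same for every $\trho^{P_1P_2P_3}\in\S$, it is enough to identify this common reduced density for one well-chosen $\trho$. I would take the maximally mixed code state $\trho=\frac1d\sum_{i=1}^d\op{\tilde i}{\tilde i}$, which is precisely the reduced density $\rho^{P_1P_2P_3}$ of the purification $\ket{\phi}=\frac1{\sqrt d}\sum_i\ket i^R\ket{\tilde i}^{P_1P_2P_3}$ (note $r=\dim\S=|R|=d$ under the present hypothesis). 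For this $\ket\phi$ one has $\rho^R=\frac1d\II^R$, hence $S(R)=\log d$, and $\ket\phi$ is pure on $RP_1P_2P_3$.

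Next I would use the defining feature of a $((2,3))$ scheme: recoverability against the erasure of any single party gives $I(R,P_1)=I(R,P_2)=I(R,P_3)=0$ (as recorded just before Definition \ref{def:secretsharing}), i.e. $S(RP_a)=S(R)+S(P_a)=\log d+S(P_a)$ for $a=1,2,3$. Purity of $\ket\phi$ on $RP_1P_2P_3$ turns this into
\[ S(P_bP_c)=\log d+S(P_a),\qquad \{a,b,c\}=\{1,2,3\}. \]
Applying subadditivity $S(P_bP_c)\le S(P_b)+S(P_c)$ to each of the three identities yields $\log d+S(P_a)\le S(P_b)+S(P_c)$ for every choice of $a$; summing over $a$ gives $3\log d+\sum_j S(P_j)\le 2\sum_j S(P_j)$, i.e. $\sum_j S(P_j)\ge 3\log d$. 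On the other hand each $P_j$ has dimension $d$, so $S(P_j)\le\log d$ and $\sum_j S(P_j)\le 3\log d$. Hence $S(P_j)=\log d$ for all $j$, which forces $\trho^{P_j}$ to be maximally mixed, $\trho^{P_j}=\frac1d\II^{P_j}$; by Proposition \ref{prop:alli3} the same holds for every $\trho^{P_1P_2P_3}\in\S$.

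I do not anticipate a genuine obstacle: the argument is a three-line entropy inequality once the setup is right. The only care needed is bookkeeping — confirming that the hypothesis $|R|=d$ indeed makes $\dim\S=d$, so the equal-weight state $\ket\phi$ has maximally mixed $\rho^R$ and $S(R)=\log d$ — together with the standard facts that $I(R,P_a)=0$, purity of $\ket\phi$ across the cut $RP_a\mid P_bP_c$, and subadditivity may all be invoked simultaneously. If one prefers to avoid Proposition \ref{prop:alli3}, the identity $\trho^{P_a}=\Tr_{P_a'}\op{\chi_a}{\chi_a}^{P_a'P_a}$ read off from condition (b)/(\ref{eqn:uni}) already shows $\trho^{P_a}$ is $\trho$-independent, after which the same entropy count applied to the maximally mixed code state pins it down.
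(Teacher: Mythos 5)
Your argument is correct, and it reaches the same conclusion by a slightly different mechanism than the paper. The paper's proof also reduces to the maximally mixed code state via Proposition \ref{prop:alli3}, but then invokes the permutation symmetry of $I_3$ among the four parties of the purification (a fact imported from Yoshida): since $I_3=-2\log d$ and $I_3\ge -2S(P_a)$ for \emph{every} party, one gets $S(P_a)\ge \log d$ directly for each $a$, and the dimension bound $S(P_a)\le\log d$ finishes it. You instead avoid the symmetry property altogether and rederive the needed inequality from first principles: $I(R,P_a)=0$ plus purity of $\ket{\phi}$ across the cut $RP_a\mid P_bP_c$ gives $S(P_bP_c)=\log d+S(P_a)$, and subadditivity summed over the three choices of $a$ yields $\sum_j S(P_j)\ge 3\log d$, which together with $S(P_j)\le\log d$ forces equality. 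The trade-off is minor: the paper's route gives the sharp bound $S(P_a)\ge\log d$ party by party at the cost of citing the (nontrivial, but standard) four-party symmetry of $I_3$; yours only controls the sum $\sum_j S(P_j)$ before the dimension bound is brought in, but uses nothing beyond subadditivity and purity, so it is the more self-contained of the two. Your bookkeeping points (that $\dim\S=|R|=d$ so $\rho^R$ is maximally mixed and $S(R)=\log d$, and that $I(R,P_a)=0$ is exactly what condition (b) of Definition \ref{def:secretsharing} supplies for this state) are all handled correctly.
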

\begin{proof}
From Lemma \ref{prop:alli3}, it is sufficient to show for only $\trho^{P_1P_2P_3}=\sum_i\frac{1}{d}\op{\tilde{i}}{\tilde{i}}$, since the reduced density is independent of the choice of $\trho^{P_1P_2P_3}$. \par
For this state, $I_3=-2S(R)=-2\log d$. From the fact that $I_3$ is symmetric with respect to the choice of parties, we have $I_3\geq-2S(P_a)$ for any party $P_a$. Therefore, $S(P_a)\geq \log d$, which is possible only when $\trho^{P_a}$ is maximally mixed.
\end{proof}
\begin{remark}
Since $\ket{\chi_a}^{P_a'P_a}$ is a purification of $\trho^{P_a}=\sum_s\frac{1}{d}\op{s}{s}^{P_a}$, up to a local unitary, it is of the form $\ket{\chi_a}=\sum_s \frac{1}{\sqrt{d}}\ket{s}^{P_a'}\ket{s}^{P_a}$. Incorporating the local unitary into $U_a$, without loss of generality, we can restate equation (\ref{eqn:uni}) as \begin{align}\label{eqn:new-uni}
    \ti^{P_aP_bP_c}=U_a
    \otimes\id^{P_a}\sum_s\frac{1}{\sqrt{d}}\ket{i}^{R}\ket{s}^{P_a'}\ket{s}^{P_a}.
\end{align}
\end{remark}
With these properties, we can prove Theorem \ref{lemma:m-u}.
\begin{proof}[Proof of Theorem  \ref{lemma:m-u}.]
Let $\S\subset P_1P_2P_3$ be a ((2,3)) sharing scheme spanned by $\{\ti\}_{i=1}^{d}$. According to definition (b), we have three unitaries $U_1$, $U_2$, $U_3$ that all satisfy (\ref{eqn:new-uni}). \par
Denote
$t_{is_1s_2s_3}$ to be the coefficients of $U_1$, i.e. \begin{align}
t_{is_1s_2s_3}=\bra{s_2s_3}^{P_2P_3}U_1\ket{is_1}^{RP_1'} .
\end{align}
By substituting this into (\ref{eqn:new-uni}), we already have a tensor representation of the basis as in (\ref{eqn:tensor-rep}),\begin{align}
    \ket{\wt{i}}=\sum_{s_1,s_2,s_3=1}^{d} \frac{1}{\sqrt{d}} t_{is_1s_2s_3} \ket{s_1s_2s_3}^{P_1P_2P_3},\nonumber
\end{align}
and that $$U_1=\sum_{i,s_1,s_2,s_3}t_{is_1s_2s_3}\ket{s_2s_3}^{P_2P_3}\bra{is_1}^{RP_1'}.$$
Our goal is to show that $t_{is_1s_2s_3}$ must be multi-unitary. We already have the condition 1 that $t:=\sum t_{is_1s_2s_3}\op{s_2s_3}{is_1}$ is unitary. \par
Assume $U_2=\sum_{j,k,l,m}g_{jk,lm}\ket{lm}^{P_1P_3}\bra{jk}^{RP_2'}$, then apply (\ref{eqn:new-uni}) twice, we get
\begin{align}\label{eqn:bigugly}
\sum_s\frac{1}{\sqrt{d}}&\ket{i}^{R} \ket{s_2}^{P_2'} \ket{s_2}^{P_2}
    =\left(U_2^{\dagger P_1P_3\rightarrow RP_2'}\otimes \id^{P_2}\right)\ti \nonumber\\
    =&\sum_{s_1}\frac{1}{\sqrt{d}}\left(U_2^{\dagger P_1P_3\rightarrow RP_2'}\otimes \id^{P_2}\right)\left(U_1^{ RP_1'\rightarrow P_2P_3}\otimes \id^{P_1}\right)\ket{i}^{R} \ket{s_1}^{P_1'} \ket{s_1}^{P_1}\nonumber \\
    =&\sum_{k,s_2}\frac{1}{\sqrt{d}}\left(
    \sum_{s_1s_3}g_{jk,s_1s_3}^*t_{is_1 s_2s_3}
    \right)
    \ket{j}^R\ket{k}^{P_2'}\ket{s_2}^{P_2}.
\end{align}
So one must have $\sum_{s_1s_3}g_{jk,s_1s_3}^*t_{is_1 s_2s_3}=\delta_{ij}\delta_{k,s_2}$. This is same as saying the maps \begin{align*}
    &g:=\sum g_{jk,lm}\op{lm}{jk}\\
    &t^R:=\sum t_{is_1s_2s_3}\op{s_1s_3}{is_2}
\end{align*}
satisfy $g^\dagger t^R=\id$. Since $g$ is a unitary, we have that $g=t^R$,  So the reshuffling $t^R$ must also be unitary, i.e. the condition 2 of multi-unitary is satisfied. In addition, we have $$U_2=\sum_{i,s_1,s_2,s_3}t_{is_1s_2s_3}\ket{s_1s_3}^{P_1P_3}\bra{is_2}^{RP_2'}$$
By repeating the same procedure for $U_3$ we can see that the partial transpose $t^{\Gamma}:=\sum t_{is_1s_2s_3}\op{s_1s_2}{is_3}$ must also be unitary and it gives the coefficient expression of $U_3$:
$$U_3=\sum_{i,s_1,s_2,s_3}t_{is_1s_2s_3}\ket{s_1s_2}^{P_1P_2}\bra{is_3}^{RP_3'}.$$
\end{proof}
We will exploit this algebraic form in the next section.

\section{Small \texorpdfstring{$I_3$}{}}

In this section we investigate code spaces which almost achieve the ((2,3)) threshold schemes. As usual we start with a code space $\S\subset P_1P_2 P_3$ and a fixed orthonormal basis
 \[ |\tilde{i}\ran^{P_1P_2P_3}  \lel \sum_{s_1,s_2,s_3}\frac{1}{\sqrt{d}} t_{is_1s_2s_3} |s_1s_2s_3\ran
   \pll .\]
  However, now $t_{is_1s_2s_3}$ does not need to be multi-unitary. We can still define the linear maps\begin{align*}
&t^{R'P_1'\rightarrow P_2P_3}:=\sum_{i,s_1,s_2,s_3}t_{is_1s_2s_3}\ket{s_2s_3}^{P_2P_3}\bra{is_1}^{R'P_1'},\\
    &t^{R'P_2'\rightarrow P_1P_3}:=\sum_{i,s_1,s_2,s_3}t_{is_1s_2s_3}\ket{s_1s_3}^{P_1P_3}\bra{is_2}^{R'P_2'}\text{, and}\\
    &t^{R'P_3'\rightarrow P_1P_2}:=\sum_{i,s_1,s_2,s_3}t_{is_1s_2s_3}\ket{s_1s_2}^{P_1P_2}\bra{is_3}^{R'P_3'}.
\end{align*}
where $R',P_a'$ are copies of $R,P_a$. 
  \begin{lemma}\label{tensor3} For all $\trho^{P_1P_2P_3}\in\S$,
 \[ D(\rho^{RP_1}||\rho^{R}\ten \rho^{P_1})
 \kl 2\log \|t^{R'P_1'\to  P_2P_3}\| \pll .\]
\end{lemma}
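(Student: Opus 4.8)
The plan is to bound the relative entropy $D(\rho^{RP_1}\|\rho^R\otimes\rho^{P_1})$ directly in terms of the operator norm of the map $t^{R'P_1'\to P_2P_3}$, by interpreting the left-hand side as a mutual information and relating the state on $RP_1$ to the action of $t$. First I would recall that for any $\trho\in\S$ with purification $\ket{\phi}^{RP_1P_2P_3}=\sum_i\sqrt{\lambda_i}\,\ket{i}^R\ket{\wt i}^{P_1P_2P_3}$, the quantity $D(\rho^{RP_1}\|\rho^R\otimes\rho^{P_1})$ is exactly the mutual information $I(R,P_1)$, and by purity $I(R,P_1)=S(R)+S(P_1)-S(P_2P_3)$ (using $S(RP_1)=S(P_2P_3)$). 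So the task reduces to showing $S(R)+S(P_1)-S(P_2P_3)\le 2\log\|t^{R'P_1'\to P_2P_3}\|$.

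The key step is to rewrite the reduced state $\rho^{P_2P_3}$ through $t^{R'P_1'\to P_2P_3}$. Plugging the tensor representation \eqref{eqn:tensor-rep} into $\ket\phi$ and tracing out $P_1$, one finds that $\rho^{P_2P_3}$ is, up to the normalization $1/d$, of the form $t\,\sigma\,t^\dagger$ where $t=t^{R'P_1'\to P_2P_3}$ and $\sigma$ is a density operator on $R'P_1'$ built from the $\lambda_i$ and the identity on $P_1'$ (essentially $\sigma = \rho^{R}\otimes \tfrac{1}{d}\id^{P_1'}$ after matching $R'\cong R$). Concretely, $\rho^{P_2P_3} = t\,\sigma\, t^\dagger$ with $S(\sigma) = S(R) + \log d$ once one accounts for the $1/\sqrt d$ factors. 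Then $S(P_2P_3) = S(t\sigma t^\dagger)$, and since $t$ need not be unitary we cannot say this equals $S(\sigma)$; instead we use the standard bound that conjugation by an operator of norm $\|t\|$ changes entropy in a controlled way. The cleanest route is: $\rho^{P_2P_3}=t\sigma t^\dagger$ has $\operatorname{Tr}\rho^{P_2P_3}=1$, and $t\sigma t^\dagger \le \|t\|^2\,\sigma'$ where $\sigma'$ is $\sigma$ transported to $P_2P_3$ (via any isometry completing $t/\|t\|$), combined with operator monotonicity / the variational characterization of entropy, to get $S(P_2P_3)\ge S(\sigma) - 2\log\|t\| = S(R)+\log d - 2\log\|t\|$. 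Since $|P_1|=d$ gives $S(P_1)\le \log d$, we conclude $S(R)+S(P_1)-S(P_2P_3)\le S(R)+\log d - (S(R)+\log d - 2\log\|t\|) = 2\log\|t\|$, which is the claim.

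I expect the main obstacle to be making the inequality $S(P_2P_3)\ge S(\sigma)-2\log\|t\|$ precise and rigorous when $t$ is merely a bounded (non-isometric) operator: one must be careful that $\rho^{P_2P_3}$ is genuinely $t\sigma t^\dagger$ with the right $\sigma$ (tracking the $1/d$ and the ancillary-copy identifications), and that the entropy-comparison lemma is applied to the correct reference state. A clean way to handle this is to write $t = \|t\|\, V$ with $\|V\|\le 1$, extend $V$ to an isometry $W$ on a larger space, express $t\sigma t^\dagger$ as $\|t\|^2$ times a partial state of $W\sigma W^\dagger$, and then use that (i) $S(W\sigma W^\dagger)=S(\sigma)$ and (ii) restricting to a subsystem and rescaling a subnormalized state only decreases entropy by at most the log of the rescaling factor — here $2\log\|t\|$ after using $\operatorname{Tr}(t\sigma t^\dagger)=1\le\|t\|^2\operatorname{Tr}\sigma$. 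The bookkeeping of which register is which, rather than any deep inequality, is where care is needed; the inequality $S(P_1)\le\log|P_1|=\log d$ and the purity identity $S(RP_1)=S(P_2P_3)$ are the only other inputs.
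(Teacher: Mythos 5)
Your overall architecture (reduce to $I(R,P_1)=S(R)+S(P_1)-S(P_2P_3)$, then lower-bound $S(P_2P_3)$) could in principle be made to work, but the step you yourself single out as the crux rests on a false principle, so there is a genuine gap. The claim that ``restricting to a subsystem and rescaling a subnormalized state only decreases entropy by at most the log of the rescaling factor'' --- equivalently, that $\rho\le c\,\sigma'$ with $\sigma'$ a state implies $S(\rho)\ge S(\sigma')-\log c$ --- is false. Take $\sigma'=\mathrm{diag}(0.9,\,0.1)$ and $\rho=\op{0}{0}$: then $\rho\le c\,\sigma'$ with $c=10/9$, yet $S(\rho)=0< S(\sigma')-\log(10/9)\approx 0.22$. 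The same failure occurs in exactly the form you need it: with $\sigma=\mathrm{diag}(0.9,\,0.1)$ and $t=\mathrm{diag}(\sqrt{10/9},\,0)$ one has $\operatorname{Tr}(t\sigma t^\dagger)=1$ and $\|t\|^2=10/9$, but $S(t\sigma t^\dagger)=0<S(\sigma)-2\log\|t\|$. What $\rho\le c\,\sigma'$ actually yields is $D(\rho\|\sigma')\le D_\infty(\rho\|\sigma')\le\log c$, i.e. $S(\rho)\ge -\operatorname{Tr}(\rho\log\sigma')-\log c$; the cross-entropy $-\operatorname{Tr}(\rho\log\sigma')$ can be strictly smaller than $S(\sigma')$, and your argument silently replaces the former by the latter.

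The intermediate inequality you want, $S(P_2P_3)\ge S(R)+\log d-2\log\|t\|$, does happen to be true in this setting, but to prove it you are forced onto the complementary system: writing $\ket{\phi}=(\mathrm{id}^{RP_1}\otimes t)\ket{\chi}$ and using $t^\dagger t\le\|t\|^2\,\mathrm{id}$ under the partial trace over $P_2P_3$ gives the operator inequality $\rho^{RP_1}\le\|t\|^2\,\rho^R\otimes\tfrac{1}{d}\mathrm{id}$, hence $D(\rho^{RP_1}\|\rho^R\otimes\tfrac{1}{d}\mathrm{id})\le 2\log\|t\|$ via $D\le D_\infty$, and expanding this relative entropy is exactly your bound on $S(RP_1)=S(P_2P_3)$. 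This is precisely the paper's proof, and it short-circuits your detour entirely: since $I(R,P_1)=\min_{\omega}D(\rho^{RP_1}\|\rho^R\otimes\omega)\le D(\rho^{RP_1}\|\rho^R\otimes\tfrac{1}{d}\mathrm{id})$, the lemma follows directly from the operator inequality and $D\le D_\infty$, with no need for $S(P_1)\le\log d$ or the purity identity. You should replace the polar-decomposition/compression step by this $D_\infty$ argument; as written, the proposal does not close.
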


\begin{proof}
By definition of the code space, we still have \begin{align*}
    \ti^{P_1P_2P_3}=id^{P_a}\otimes t^{R'P_1'\to  P_2P_3}\sum_{s_1}\frac{1}{\sqrt{d}}\ket{i}^{R'}\ket{s_1s_1}^{P_1'P_1}\pll .
\end{align*}
  Let $\trho=\sum_{i}\lambda_i\op{\wt{\psi_i}}{\wt{\psi_i}}$ be the spectral decomposition, then by superposition,\begin{align*}
    \ket{\wt{\psi_i}}^{P_1P_2P_3}=id^{P_a}\otimes t^{R'P_1'\to  P_2P_3}\sum_{s_1}\frac{1}{\sqrt{d}}\ket{\psi_i}^{R'}\ket{s_1s_1}^{P_1'P_1}\pll .
\end{align*}
So the purification satisfies\begin{align*}
    \ket{\phi}^{RP_1P_2P_3}:&=\sum_i\sqrt{\lambda_i}\ket{\psi_i}^R\ket{\wt{\psi_i}}^{P_1P_2P_3}\\
    &=id^{P_a}\otimes t^{R'P_1'\to  P_2P_3}\left(
    \sum_{i,s_1}\sqrt{\lambda_i/d}\ket{\psi_i}^{R'}\ket{s_1}^{P_1'}\otimes
    \ket{\psi_i}^{R}\ket{s_1}^{P_1}
    \right)
\end{align*}Let $\ket{\chi}$ be the vector on the right hand side before applying $t$. Then we get
 \begin{align*}
   \rho^{RP_1} &=  tr_{P_2P_3}(\op{\phi}{\phi}) \\
  &= tr_{P_2P_3}\left(t\op{\chi}{\chi} t^\dagger \right) \\
  &\le \|t\|^2 tr_{R'P_1'}(\op{\chi}{\chi}) \\
  &= \|t\|^2 \sum_{i} \lambda_i \op{\psi_i}{\psi_i} \ten \frac{\id}{d}  \\
  &= \|t\|^2 \rho^R\ten \rho^{P_1}
    \end{align*}
Recall that (see \cite{Wilde})
 \[ D(\rho||\si) \kl  D_{\infty}(\rho||\si)
 \lel \inf \{\la| \rho \kl 2^{\la} \si \} \pll .\]
In particular,
 \[ D(\rho^{RP_1}|| \rho^R\ten \rho^{P_1})
 \kl \log \|t\|^2 \pll .\]
The assertion follows.
\end{proof}\begin{cor}  Let $\S\subset P_1P_2P_3$ be a coding subspace  and $t$ the tensor as above. Then
 \begin{align*}
  -2S(R)&\le  I_3(P_1:P_2:P_3) \kl  -2S(R)\\
  &\quad  +
 2\log\|t^{R'P_1'\to P_2P_3}\|+
 2\log\|t^{R'P_2'\to P_1P_3}\|+
 2\log\|t^{R'P_3'\to P_1P_2}\| \pll ,
 \end{align*}
 for any $\trho\in\S$.
\end{cor}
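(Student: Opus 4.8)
The plan is to combine Lemma~\ref{tensor3} with the symmetry of the tripartite information under permutation of the parties. Recall from the introduction that
\[
 I_3(P_1:P_2:P_3) = -2S(R) + I(R,P_1) + I(R,P_2) + I(R,P_3),
\]
where the lower bound $-2S(R) \le I_3$ is immediate from positivity of mutual information. So the only content is the upper bound, and it suffices to bound each $I(R,P_a)$ separately by $2\log\|t^{R'P_a'\to P_bP_c}\|$.

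First I would recall that mutual information is a relative entropy: $I(R,P_a) = D(\rho^{RP_a}\,\|\,\rho^R\ten\rho^{P_a})$. This is the precise quantity that Lemma~\ref{tensor3} controls: for $a=1$ the lemma gives directly
\[
 I(R,P_1) = D(\rho^{RP_1}\,\|\,\rho^R\ten\rho^{P_1}) \le 2\log\|t^{R'P_1'\to P_2P_3}\|.
\]
Next I would observe that the proof of Lemma~\ref{tensor3} used nothing special about the label $1$: the tensor $t$ enters symmetrically through the three reshuffled maps $t^{R'P_a'\to P_bP_c}$, and repeating the argument verbatim with the roles of the indices permuted (using the factorization $\ket{\wt i} = \id^{P_a}\otimes t^{R'P_a'\to P_bP_c}\sum_{s_a}\frac1{\sqrt d}\ket{i}^{R'}\ket{s_as_a}^{P_a'P_a}$, which holds for each $a$ by the same bookkeeping that produced it for $a=1$) yields
\[
 I(R,P_2) \le 2\log\|t^{R'P_2'\to P_1P_3}\|, \qquad I(R,P_3) \le 2\log\|t^{R'P_3'\to P_1P_2}\|.
\]
Summing the three bounds and substituting into the identity for $I_3$ gives the claimed upper inequality; the lower inequality is the trivial bound already noted. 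Since Lemma~\ref{tensor3} and the $I_3$ identity both hold for an arbitrary $\trho\in\S$, so does the corollary.

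The only mild subtlety — and the step I would be most careful about — is verifying that the factorization $\ket{\wt i} = \id^{P_a}\otimes t^{R'P_a'\to P_bP_c}\sum_{s_a}\frac1{\sqrt d}\ket{i}^{R'}\ket{s_as_a}^{P_a'P_a}$ is genuinely valid for $a=2,3$ and not just $a=1$. This is purely a matter of index relabeling in the definition \eqref{eqn:tensor-rep} of $\ket{\wt i}$ together with the definitions of the three reshuffled maps at the start of this section, so there is no real obstacle; it is worth stating explicitly that the three inequalities are the three instances of one lemma rather than reproving anything. Everything else — the monotonicity $D \le D_\infty$ and the operator-norm estimate $\rho^{RP_a}\le\|t\|^2\,\rho^R\ten\rho^{P_a}$ — is already packaged inside Lemma~\ref{tensor3}.
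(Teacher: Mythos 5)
Your proposal is correct and matches the paper's own proof, which likewise writes $I_3 = -2S(R)+I(R,P_1)+I(R,P_2)+I(R,P_3)$ and applies Lemma~\ref{tensor3} three times. Your extra care about the index-relabeled factorization for $a=2,3$ is a reasonable elaboration of what the paper leaves implicit, but it is not a different argument.
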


\begin{proof} We have seen above that
 \begin{align*}
  I_3(R:P_1:P_2)\lel -2S(R)+I(R,P_1)+I(R,P_2)+I(R,P_3) \pll .
  \end{align*}
Therefore applying Lemma \ref{tensor3} three times, we get the assertion.
\qd


\section{Random estimates}

Page scrambling \cite{Page} turned out to be of fundamental importance in building up a suitable theory for black holes (see \cite{Eng1,Eng2,Eng3}). We will briefly indicate how random unitaries for $RP_1,P_2P_3$ will deliver low $I_3$ estimates. Here we will assume that $H=RP_1$ and $|R|=|P_1|=|P_2|=|P_3|$.  Let us start with some probabilistic background by fixing a basis $(e_t)_{1\le t\le d^2}$ for $H$. Recall that a random unitary here means Haar distributed. A complex gaussian matrix is of the form
\[ g \lel (g_{st})_{s,t} \]
such that each entry
 \[ g_{st}\lel \frac{g_{st}(1)+ig'_{st}(2)}{\sqrt{2}} \]
is given by a complex gaussian entry, which are i.i.d. The following Lemma is implicitly contained in \cite{MP}. Indeed, we use the complex version of \cite[Corollary 2.4]{MP} by considering the Banach space $X$ obtained from $\Mz_n$ equipped with the semi-norm. Then we may write
 \[ g\lel \sum_{rs} tr(g|s\ran\lan r|)|r\ran\lan s| ) \pll .\]
For a reader interested in seeing how \cite{MP} directly implies our next result, we suggest to work with the Banach space valued matrices  $x_{rs}=|s\ran\lan r|\ten |r\ran\lan s| \in \Mz_n(X)$. For the convenience of the reader and an explicit control of constants, we provide the proof for a special case of \cite{MP}.

\begin{theorem} Let $\|\pll\|$ be a semi-norm on $\Mz_n$ and $1\le p\le \infty$. Then
 \[  \frac{\sqrt{n}}{8} (\ez \|u\|^p)^{1/p} \kl (\ez \|g\|^p)^{1/p} \kl 4\sqrt{pn} (\ez \|u\|^p)^{1/p} \pll .
  \]
\end{theorem}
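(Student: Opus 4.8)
The plan is to reduce the statement to a comparison between a Haar-distributed unitary $u$ and a normalized Gaussian matrix $g$, using the fact that the normalized columns of $g$ are essentially uniform on the sphere. The key observation is that a $d\times d$ complex Gaussian matrix $g$ (with $d=n$) can be written via its singular value decomposition as $g = v D w$, where $v,w$ are independent Haar unitaries and $D=\operatorname{diag}(\sigma_1,\dots,\sigma_n)$ collects the singular values, with $(v,w)$ independent of $D$. Since any semi-norm on $\Mz_n$ is in particular unitarily \emph{not} invariant in general, I cannot directly pull $v,w$ out; instead I would exploit the polar-type decomposition more carefully. Write $g = u\,|g|$ where $u = g|g|^{-1}$ is Haar-distributed (a.s. $g$ is invertible) and $|g|=(g^*g)^{1/2}$, with $u$ independent of $|g|$. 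Then $\|g\| = \|u\,|g|\|$ and I want to sandwich this between multiples of $\|u\|$.

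\textbf{Upper bound.} For the right-hand inequality, I would use $\|g\| = \|u|g|\| \le \|u\|\,\||g|\|_{\text{op}} = \|u\|\,\|g\|_{\text{op}}$ — wait, this is false for a general semi-norm since $\|\cdot\|$ need not be an operator ideal norm. The correct route is: for \emph{each fixed} unitary $w$, $\|uw\|$ has the same distribution as $\|u\|$ (right translation invariance of Haar measure), and more usefully, condition on $|g|$ and write $g = u|g|$; but $\|u|g|\|$ still mixes the semi-norm with $|g|$ multiplicatively in a way a general semi-norm does not control. So instead I would follow the genuinely correct classical approach: expand $g = \sum_{rs} g_{rs}\,|r\rangle\langle s|$ and compare $\ez\|g\|$ with $\ez\|u\|$ directly by a symmetrization/contraction argument. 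Concretely, realize $u$ as the first-$n$-columns truncation is not needed here since $u$ is $n\times n$; use the representation $g \stackrel{d}{=} u \cdot r$ where $r=\|g\|_{HS}/\sqrt n \cdot(\text{radial part})$ — the cleanest is: the rows of $g$, after normalization, are uniform on the sphere and independent, so $g = R\,\tilde u$ where $R=\operatorname{diag}(\rho_1,\dots,\rho_n)$ with $\rho_i$ the row norms and $\tilde u$ having i.i.d. Haar-sphere rows (this is \emph{not} a unitary, but is close to one). I expect the honest proof to go through \cite{MP}'s contraction principle: $\ez\|g\|^p \sim \ez_\rho\,\ez_u\|\operatorname{diag}(\rho)u\|^p$ and then concentrate $\rho_i$ around $\sqrt n$.

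\textbf{The main steps} I would carry out: (1) establish $g \stackrel{d}{=} \operatorname{diag}(\rho_1,\dots,\rho_n)\,u'$ where $u'$ is Haar and $\rho_i^2$ are i.i.d.\ $\chi^2_{2n}/2$-type variables independent of $u'$ (rotational invariance of complex Gaussians column-by-column, plus Gram--Schmidt); (2) for the upper bound, bound $\|\operatorname{diag}(\rho)u'\| \le (\max_i\rho_i)\,\|u'\|$ is again too lossy — rather use $\ez_\rho\|\operatorname{diag}(\rho)u'\|^p \le (\ez_\rho \max_i\rho_i^p)\ez_{u'}\|u'\|^p$ and control $\ez\max_i\rho_i^p \lesssim (pn)^{p/2}$ by standard $\chi$-tail estimates, giving the factor $4\sqrt{pn}$; (3) for the lower bound, use Jensen in reverse: $\ez_\rho\|\operatorname{diag}(\rho)u'\|^p \ge \|\ez_\rho\operatorname{diag}(\rho)\cdot u'\|^p$ is wrong direction, so instead note $\operatorname{diag}(\rho)u' = \operatorname{diag}(\rho)\cdot u'$ and condition: $\ez\|g\|^p = \ez_\rho \ez_{u'}\|\operatorname{diag}(\rho)u'\|^p \ge \ez_{u'}\|\ez_\rho[\operatorname{diag}(\rho)]\,u'\|^p = (\ez\rho_1)^p\,\ez\|u'\|^p$ by convexity of $t\mapsto\|t\,u'\|^p$ in the diagonal scaling $\operatorname{diag}(\rho)\mapsto$ — here I use that $g\mapsto\|g\|^p$ is convex and apply the conditional Jensen inequality pushing the average of $\operatorname{diag}(\rho)$ inside, with $\ez\rho_1 \ge \sqrt n/\sqrt 2 \ge \sqrt n/2$ after a clean estimate of $\ez\sqrt{\chi^2_{2n}/2}$, yielding $\sqrt n/8$ with room to spare.

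\textbf{Expected main obstacle.} The delicate point is the \emph{lower} bound, step (3): I want $\ez_\rho\|\operatorname{diag}(\rho)u'\|^p \ge \|(\ez_\rho\operatorname{diag}(\rho))u'\|^p$, which requires convexity of $D \mapsto \|Du'\|^p$ on diagonal matrices $D$ — this holds since $D\mapsto Du'$ is linear and $\|\cdot\|^p$ is convex (any semi-norm composed with $t\mapsto t^p$, $p\ge1$, is convex) — so conditional Jensen applies and gives $\ez\|g\|^p \ge (\ez\rho_1)^p\,\ez\|u'\|^p$. The remaining work is the scalar estimate $\ez\rho_1 = \ez\big(\sum_{j=1}^n |g_{1j}|^2\big)^{1/2} \ge \big(\text{something}\big)\sqrt n$; by Gaussian concentration $\rho_1$ is tightly around $\sqrt n$, so $\ez\rho_1 \ge \tfrac12\sqrt n$ is very safe, and for the upper side $\ez\max_{1\le i\le n}\rho_i^p \le n\cdot\ez\rho_1^p \le n\,(Cpn)^{p/2}$ via the moment bound $\ez\rho_1^p \le (C(n+p))^{p/2}$, absorbing constants into the stated $4$. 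The only genuine subtlety is making sure the Gram--Schmidt step in (1) produces a \emph{Haar} unitary $u'$ exactly independent of the scaling $\rho$; this is the standard fact that if $g$ has i.i.d.\ standard complex Gaussian entries then $g = LQ$ (lower-triangular times unitary) has $Q$ Haar and independent of $L$, and one reads off $\rho_i$ from $L$. Modulo tracking constants, the rest is routine.
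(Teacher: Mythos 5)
The factorization your whole argument rests on is false. Gram--Schmidt applied to the rows of a complex Gaussian matrix yields $g = LQ$ with $L$ \emph{lower triangular} and $Q$ Haar; the rows of $g$ are not mutually orthogonal, and the diagonal of $L$ is not the vector of row norms (only $|L_{11}|=\rho_1$). So there is no identity $g \stackrel{d}{=} \mathrm{diag}(\rho_1,\dots,\rho_n)\,u'$ with $u'$ a Haar unitary independent of $\rho$, and both your step (2) and step (3) collapse with it. The decomposition that actually works --- and the one the paper uses --- is the singular value decomposition combined with bi-unitary invariance of the complex Gaussian: $g \stackrel{d}{=} u\,D_{s(g)}\,w$ with $u,w$ independent Haar unitaries, independent of the singular values $s(g)$. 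Having unitaries on \emph{both} sides of the diagonal factor is what makes the rest of the argument possible.

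Even granting a correct factorization, your upper bound still leans on the pointwise inequality $\|\mathrm{diag}(\rho)u'\| \le (\max_i \rho_i)\,\|u'\|$ (your ``rather use'' line is exactly this bound integrated against independence), and that inequality fails for a general semi-norm on $\Mz_n$: take $\|A\|=|A_{11}+A_{22}|$ and $D=\mathrm{diag}(1,-1)$. A semi-norm need not be an operator-ideal norm, and this is precisely the difficulty the paper's extreme-point argument is designed to avoid: write the diagonal $D_{s(g)}$ as $\sum_{\eps}\al(\eps)D_{\eps}$ over sign-diagonal matrices with $\sum_{\eps}|\al(\eps)|\le \|g\|_{\infty}$, observe that each $uD_{\eps}w$ is again Haar distributed, apply the triangle inequality to get $(\ez_{u,w}\|uD_{s(g)}w\|^p)^{1/p}\le \|g\|_{\infty}(\ez\|u\|^p)^{1/p}$, and finish with Chevet's inequality $(\ez\|g\|_{\infty}^p)^{1/p}\le 4\sqrt{pn}$. (Your union bound $\ez\max_i\rho_i^p\le n\,\ez\rho_1^p$ also introduces a spurious factor $n^{1/p}$, which is not $O(1)$ for small $p$.) Your lower-bound instinct --- convexity plus the fact that the scaling averages to a multiple of the identity --- is essentially right and is what the paper does, except that it conjugates $D_{s(g)}$ by random permutation matrices, absorbs these into $u$ and $w$, and obtains $\ez_{\si}M_{\si}D_{s(g)}M_{\si}^{-1}=\frac1n \mathrm{tr}(|g|)\cdot \mathrm{id}$, after which the scalar estimate $\ez\,\frac1n \mathrm{tr}|g|\ge \sqrt n/8$ follows from $n^2=\ez\,\mathrm{tr}(g^*g)\le \ez\|g\|_{\infty}\|g\|_1$ together with Chevet and Khintchine--Kahane. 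As written, your proof has no valid factorization to which these (otherwise sound) convexity steps can be applied.
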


\begin{proof} We will make frequent use of the Khintchine-Kahane inequalities (which has bet constant $\sqrt{2}$ for comparing $\|\pll \|_2$ and $\|\pll\|_1$ norms see \cite{Latala}), and Chevet's inequality see e.g. \cite{LT}.
Let $g=(g_{rs})$ be the complex gaussian variable from above. For complex unitaries $u,w$ we see that $ugw=_{D}g$ has the same distribution.  For a matrix $g$ we recall that $g=u_gD_{s(g)}w_g$ is the singular value decomposition with diagonal matrix $D_{s(g)}$ given by the singular values, i.e. the eigenvalues of the absolute value, $s_j(g)=\la_j(|g|)$ . Therefore, we deduce equality in distribution
 \[ g =_{D} ugw =_{D} u u_{g}D_{s(g)}w_gw  =_{D} u D_{s(g)}w =_{D} uM_{\si}D_{s(g)}M_{\si^{-1}}w
\pll . \]
Here $M_{\si}$ is a permutation matrix.
This implies by convexity
 \begin{align*}
 (\ez \|g\|^p)^{1/p} &= (\ez_{u,w,\si} \|u M_{\si}D_{s(g)}M_{\si^{-1}} w\|^p) \\
 &\gl (\ez_{u,w} \|u \ez_{\si} (M_{\si}D_{s(g)}M_{\si}^{-1}) w\|^p)^{1/p} \\
 &= \ez \frac{1}{n} tr(|g|)
 (\ez_{u,w}  \|uw\|^p)^{1/p} \\
 &= \ez \frac{1}{n} tr(|g|) (\ez \|u\|^p)^{1/p} \pll .
 \end{align*}
An upper bound for $\|g\|_{\infty}$ follows from Chevet's inequality for real gaussian matrices (see \cite{LT}), namely
 \begin{align*}
  (\ez \|g^{\cz}\|_{\infty}^p)^{1/p}&\le
  \sqrt{2p} \ez \|g^{\cz}\|_{\infty} \\
 &\kl 2\sqrt{p} \ez \|g^{\rz}\|_{\infty} \\
 &\kl 4\sqrt{p} \sqrt{n} \pll .
 \end{align*}
Thus by duality, we deduce from the Khintchine-Kahane inequality that
 \begin{align*}
  n^2 &= \ez tr(g^*g)  \kl \ez \|g\|_{\infty} \|g\|_1 \\
  &\kl (\ez \|g\|_1^2)^{1/2} (\ez \|g\|_{\infty}^2)^{1/2} \\
  &\kl 4\sqrt{2} \sqrt{2} \sqrt{n} \ez \|g\|_1 \pll .
\end{align*}
Here $\|g\|_1$ and $\|g\|_{\infty}$ refer to the trace class, or operator norm, respectively. This completes the proof of the lower estimate. For the upper estimate, we may use an extreme point argument, i.e., for any diagonal matrix $D_{\la}$, it can be written as
 \[ D_{\la} \lel  \summ_{\eps} \al(\eps,\la) D_{\eps} \]
such that
 \[ \sum_{\eps} |\al(\eps,\la)| \kl \|\la\|_{\infty} \pll .\]
Indeed, the extreme points of $[-1,1]^n$ are given by the $\om 1$ matrices $\eps$. This allows us to use the triangle inequality for fixed $g$ and random $u,w$
 \begin{align*}
 (\ez_{u,w} \|uD_{\si(g)}w\|^p)^{1/p}
 &= (\ez_{u,w} \|u\sum_{\eps} \al(\eps,g)D_{\eps}w\|^p)^{1/p} \\
 &\le \sum_{\eps} |\al(\eps,g)|
 (\ez \|uD_{\eps}w\|^p)^{1/p} \\
 &\le \|g\|_{\infty} (\ez \|u\|^p)^{1/p} \pll .
 \end{align*}
Integrating this over $g$ implies that
 \[ (\ez \|g\|^p)^{1/p}
 \kl (\ez \|g\|_{\infty}^p)^{1/p} (\ez \|u\|^p)^{1/p} \pll .\]
The assertion follows. \qd

\begin{cor} Let $|R|=d$ and
$u:RP_1\to P_2P_3$ be a random unitary. Let $\delta>0$.  Then with probability $1-\delta$
 \[ \max\{\|u\|,\|u^{R}\|,\|u^{\Gamma}\|\} \kl 48e\sqrt{2 \log \frac{1}{\delta}} \pll .\]
where $u^R, u^{\Gamma}$ are defined as the reshuffling and partial transpose with a flip, defined in Theorem \ref{lemma:m-u}.
\end{cor}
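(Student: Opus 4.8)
The plan is to apply the previous theorem (the comparison $\frac{\sqrt n}{8}(\ez\|u\|^p)^{1/p}\kl (\ez\|g\|^p)^{1/p}\kl 4\sqrt{pn}(\ez\|u\|^p)^{1/p}$) with $n=d^2$ to each of the three seminorms on $\Mz_{d^2}$ obtained by pulling back the operator norm through the three rearrangements $t\mapsto t$, $t\mapsto t^R$, $t\mapsto t^{\Gamma}$. Concretely, for a Haar unitary $u:RP_1\to P_2P_3$, thought of as a $d^2\times d^2$ matrix, set $\|x\|_{(1)}=\|x\|_\infty$, $\|x\|_{(2)}=\|x^R\|_\infty$, $\|x\|_{(3)}=\|x^\Gamma\|_\infty$. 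Since reshuffling and partial-transpose-with-flip are linear reindexings of the matrix entries, each $\|\cdot\|_{(j)}$ is a genuine seminorm on $\Mz_{d^2}$, so the theorem applies verbatim. The first step is therefore to record that $(\ez\|u\|_{(j)}^p)^{1/p}\le \frac{8}{\sqrt{d^2}}(\ez\|g\|_{(j)}^p)^{1/p}=\frac{8}{d}(\ez\|g\|_{(j)}^p)^{1/p}$ for each $j$ and all $p$.

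Next I would bound the Gaussian side. The key point is that the three rearrangements do not change the Euclidean (Hilbert–Schmidt) structure of the entries, so $g^R$ and $g^\Gamma$ have exactly the same distribution as $g$ itself — each is still a $d^2\times d^2$ matrix of i.i.d.\ complex Gaussians with the same variance. Hence $(\ez\|g\|_{(j)}^p)^{1/p}=(\ez\|g\|_\infty^p)^{1/p}$ for all three $j$. From Chevet's inequality as used in the proof of the previous theorem, $\ez\|g\|_\infty\le 4\sqrt{d^2}=4d$, and the Gaussian concentration gives $(\ez\|g\|_\infty^p)^{1/p}\le \sqrt{2p}\,\ez\|g\|_\infty\le 4d\sqrt{2p}$ for $p\ge 1$. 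Combining with the previous paragraph yields $(\ez\|u\|_{(j)}^p)^{1/p}\le 32\sqrt{2p}$ uniformly in $j$ and in the dimension $d$.

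The final step is a standard moment-to-tail conversion. From $(\ez X^p)^{1/p}\le 32\sqrt{2p}$ for all $p\ge1$, a union bound over $j=1,2,3$ and Markov's inequality at the optimized value $p=\log(3/\delta)$ give
\[
\PP\Big(\max_j \|u\|_{(j)} > e\cdot 32\sqrt{2\log(3/\delta)}\Big)
\le 3\, e^{-p}\Big(\tfrac{e\cdot 32\sqrt{2p}}{32\sqrt{2p}}\Big)^{-p}\cdot(\cdots)
\]
— more cleanly: $\PP(\|u\|_{(j)}>e\,\tfrac{1}{\sqrt e}\cdot 32\sqrt{2p})\le e^{-p}$ after choosing $p$ to absorb the constant, and then summing the three terms and solving $3e^{-p}=\delta$ gives the threshold $48e\sqrt{2\log(1/\delta)}$ once the numerical constants $32$ and the union-bound factor $3$ are folded in (using $32\sqrt{\log 3}$-type slack to pass from $\log(3/\delta)$ to $\log(1/\delta)$). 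I would present this as: choosing $p=\log(3/\delta)$ and applying Markov to each $\|u\|_{(j)}^p$, then a union bound, produces $\PP(\max_j\|u\|_{(j)}>48e\sqrt{2\log(1/\delta)})\le\delta$, which is the claim with $\|u\|=\|u\|_{(1)}$, $\|u^R\|=\|u\|_{(2)}$, $\|u^\Gamma\|=\|u\|_{(3)}$.

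The main obstacle is purely bookkeeping of constants: verifying that $g^R,g^\Gamma\sim g$ in distribution is immediate, and the theorem is quoted, so the only real work is tracking the factor-of-two losses (Khintchine–Kahane, Chevet, the $8$ and $4\sqrt p$ in the theorem, the union-bound factor $3$) to land on the stated $48e\sqrt 2$; I would not belabor this in the writeup beyond indicating the choice $p=\log(3/\delta)$ and the union bound.
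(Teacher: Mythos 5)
Your proposal is correct and follows essentially the same route as the paper: compare Haar-unitary moments to Gaussian moments via the preceding theorem, use that reshuffling and partial transposition are entry permutations so the Gaussian matrix is distributionally invariant, bound $\ez\|g\|_\infty$ by Chevet, and convert moments to tails with an optimized choice of $p$. The only (cosmetic) difference is that you apply the theorem three times and union-bound at the end, whereas the paper applies it once to the single seminorm $\max\{\|g\|_\infty,\|g^R\|_\infty,\|g^\Gamma\|_\infty\}$ and absorbs the factor of $3$ via the triangle inequality on the Gaussian side; both land on the same constant up to the bookkeeping you already flag.
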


\begin{proof} Let $g=(g_{ab})$ be a complex gaussian matrix. Since $H=SP_1$ we may assume that $a=(i,s)$ is given by pairs and $g_{ab}$ is given by $m=d^2$ many complex independent gaussian random variables.  Note that the map $w(e_{i,s,j,r})=e_{i,r,j,s}$ is a permutation unitary and hence preserves the norm:
  \[ (\ez \|(w(g)_{ab})\|_{\infty}^p)^{1/p} \lel  (\ez \|(g_{ab})\|_{\infty}^p)^{1/p}
  \kl 2\sqrt{2p} \sqrt{m} \pll .\]
The same applies for all permutation of the indices.  Let us introduce the new seminorm
 \[ \|g\|\lel \max\{\|g\|_{\infty},\|g^{R}\|_{\infty},\|g^{\Gamma}\|_{\infty}\} \pll .\]
Then we deduce  from the triangle inequality that
 \begin{align*}
 (\ez \|u\|^p)^{1/p}
 &\kl \frac{8}{\sqrt{m}} (\ez \|g\|^p)^{1/p} \\
 &\kl \frac{24}{\sqrt{m}} (\ez \|g\|_{\infty}^p)^{1/p} \\
 &\kl  48 \sqrt{2p}   \pll .
 \end{align*}
Thus for every $\la\gl 1$, by Chebyshev inequality, we see that
 \begin{align*}
 {\rm Prob}(\|u\|\gl \la) &\kl  48^p \la^{-p} \sqrt{2p}^p \pll .
 \end{align*}
Now we choose $p=(\frac{\la}{48e\sqrt{2}})^2$ and deduce
   \[ {\rm Prob}(\|u\|\gl \la)\kl  e^{-p} = e^{-(\frac{\la}{48e\sqrt{2}})^2} \pll .\]
Thus it suffices to choose $\la= 48e\sqrt{2 \log \frac{1}{\delta}}$ so that $p\geq 1$. \qd

Combining these estimates, we obtain the following result.

\begin{theorem} Let $d=|R|=|P_1|=|P_2|=|P_3|$ and $\delta>0$. Let $u:RP_1\to P_2P_3$ be a random unitary and
 \[ |\tilde{i}\ran \lel \frac{1}{\sqrt{d}}
 \sum_{s_1,s_2,s_3}
 u_{(i,s_1),(s_2,s_3)} |s_1s_2s_3\ran \pll .\]
With probability  $1-e^{-
\mu}$ the estimate
 \[ I_3(P_1:P_2:P_3)\kl -2S(R)+ \ln(48\sqrt{2}) + 6 +3\ln \mu
   \]
holds for all purified input states.
\end{theorem}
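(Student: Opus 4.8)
The plan is to combine the deterministic bound from the Corollary at the end of Section 3 with the probabilistic operator-norm estimate from the Corollary at the end of Section 4. Recall that the Corollary in Section 3 gives, for the tensor $t_{is_1s_2s_3}=u_{(i,s_1),(s_2,s_3)}$ and \emph{any} $\trho\in\S$, the two-sided estimate
\begin{align*}
-2S(R)\le I_3(P_1:P_2:P_3)\le -2S(R)+2\log\|t^{R'P_1'\to P_2P_3}\|+2\log\|t^{R'P_2'\to P_1P_3}\|+2\log\|t^{R'P_3'\to P_1P_2}\|.
\end{align*}
The three reshuffled norms are precisely $\|u\|$, $\|u^R\|$, $\|u^{\Gamma}\|$ in the notation of Theorem~\ref{lemma:m-u} and the Section~4 Corollary. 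So the whole statement reduces to a high-probability upper bound on $\max\{\|u\|,\|u^R\|,\|u^\Gamma\|\}$, and crucially this bound is uniform in the input state because the Section~3 Corollary already holds for all $\trho\in\S$.

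First I would invoke the Section~4 Corollary, or rather redo its last two lines with $\mu$ in place of $\log\frac1\delta$: from $(\ez\|u\|^p)^{1/p}\le 48\sqrt{2p}$ for the combined seminorm $\|u\|=\max\{\|u\|_\infty,\|u^R\|_\infty,\|u^\Gamma\|_\infty\}$, Chebyshev gives $\mathrm{Prob}(\|u\|\ge\lambda)\le 48^p\lambda^{-p}(2p)^{p/2}$, and optimizing with $p=(\lambda/(48e\sqrt2))^2$ yields $\mathrm{Prob}(\|u\|\ge\lambda)\le e^{-(\lambda/(48e\sqrt2))^2}$. Setting $e^{-(\lambda/(48e\sqrt2))^2}=e^{-\mu}$, i.e. $\lambda=48e\sqrt{2\mu}$, we get that with probability $1-e^{-\mu}$ all three norms are at most $48e\sqrt{2\mu}$ (and $p\ge1$ provided $\mu\ge 1/(2e^2)$, which we may assume or absorb into constants).

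Then I would just substitute: on that event, each term $2\log\|t^{\cdots}\|\le 2\log(48e\sqrt{2\mu})=2\log(48\sqrt2)+2+\log\mu$, so the sum of the three terms is $6\log(48\sqrt2)+6+3\log\mu$. Hmm — the stated bound is $\ln(48\sqrt2)+6+3\ln\mu$ rather than $6\ln(48\sqrt2)+6+3\ln\mu$, so I would either absorb the numerical discrepancy (since $6\ln(48\sqrt2)$ is just a larger absolute constant, one can simply replace the constant, or note $48\sqrt2$ should be read as already the sixth power of the per-term constant) or track constants slightly more carefully; in any case the structure $-2S(R)+\text{const}+3\ln\mu$ is exactly what the two Corollaries produce. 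The only genuinely substantive point — and the main thing to be careful about — is the order of quantifiers: one must apply the \emph{state-independent} Section~3 Corollary so that a single random draw of $u$ works simultaneously for all purified inputs $\ket{\phi}$; the randomness is only in $u$, not in the state, and $\|u\|,\|u^R\|,\|u^\Gamma\|$ do not depend on $\trho$. Everything else is routine: no concentration over the state space is needed, which is exactly the improvement over Page-type arguments advertised in the introduction.
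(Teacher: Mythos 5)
Your proposal is correct and follows essentially the same route as the paper, whose entire proof of this theorem is the single line ``Combining these estimates'' --- i.e.\ exactly your combination of the state-uniform Section~3 corollary with the Section~4 tail bound at $\delta=e^{-\mu}$, $\lambda=48e\sqrt{2\mu}$. Your side observation is also well taken: the honest arithmetic gives the constant $6\ln(48\sqrt{2})+6+3\ln\mu$ rather than the stated $\ln(48\sqrt{2})+6+3\ln\mu$ (and the threshold for $p\ge 1$ is $\mu\ge 1$, since $p=\mu$), but these only affect the absolute constant and not the structure $-2S(R)+\mathrm{const}+3\ln\mu$ valid simultaneously for all purified inputs.
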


\begin{remark} Here we used the natural logarithm. With the base two, we find our estimate is worse than the estimates from \cite{Yoshida}, but it works for all the states. With high probability we just have to allow for a small number of additional bits.
\end{remark}  


\section{examples}
\subsection{A perfect secret sharing protocol for arbitrary dimension}\hfill\par
We first characterize a permutation code space $\S \subset P_1 P_2 P_3$ by fixing its basis to be
\begin{align}\label{eqn:permu}
  \ket{ \tilde{i}}^{P_1P_2P_3} = \frac{1}{\sqrt{n}}\sum_{s=1}^{d}
    \ket{\sigma_1^i(s)}^{P_1}\ket{\sigma_2^i(s)}^{P_2}\ket{\sigma_3^i(s)}^{P_3},
    \ i = 1,\cdots,d,
\end{align}
where $\sigma_{j}$ denotes a permutation operator in $S_d$ for $j \in \{1,2,3\}$,
and $\sigma_{j}^i$ denotes the composition of $\sigma_j$ for $i$ times.
\begin{prop}
  The code space $\S$ is a ((2,3)) threshold scheme if and only if for any $a\neq b\in\{1,2,3\}$, \begin{equation}
  \label{ortho2}
\sigma_a^i(s) \neq \sigma_b^i(s)\ \text{for all}\  s \in \{1,\cdots, d\}, i \in \{1,\cdots, d-1\}.
\end{equation}
\end{prop}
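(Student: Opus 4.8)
The plan is to invoke the multi-unitarity characterization of Theorem~\ref{lemma:m-u}: the code space $\S$ with basis (\ref{eqn:permu}) is a $((2,3))$ threshold scheme precisely when the tensor $t_{is_1s_2s_3}$ obtained by matching (\ref{eqn:permu}) against the representation (\ref{eqn:tensor-rep}) is multi-unitary. So the first step is to identify this tensor: comparing the two expressions gives $t_{is_1s_2s_3}=1$ if there is an $s\in\{1,\dots,d\}$ with $s_j=\sigma_j^i(s)$ for $j=1,2,3$, and $t_{is_1s_2s_3}=0$ otherwise. Since every $\sigma_j^i$ is a bijection of $\{1,\dots,d\}$, such an $s$ is unique when it exists and is recovered from any single coordinate, e.g. $s=\sigma_1^{-i}(s_1)$. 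Consequently $t=\sum t_{is_1s_2s_3}\op{s_2s_3}{is_1}$ sends each basis vector $\ket{is_1}$ to the \emph{single} basis vector $\ket{\sigma_2^i\sigma_1^{-i}(s_1)}\otimes\ket{\sigma_3^i\sigma_1^{-i}(s_1)}$, and similarly $t^R\ket{is_2}=\ket{\sigma_1^i\sigma_2^{-i}(s_2)}\otimes\ket{\sigma_3^i\sigma_2^{-i}(s_2)}$ and $t^{\Gamma}\ket{is_3}=\ket{\sigma_1^i\sigma_3^{-i}(s_3)}\otimes\ket{\sigma_2^i\sigma_3^{-i}(s_3)}$.

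Next I would use the elementary fact that a $0$--$1$ matrix each of whose columns has exactly one nonzero entry is unitary if and only if it is a permutation matrix, i.e. if and only if the induced map on basis labels is injective; since source and target both have $d^2$ elements, injectivity already forces bijectivity. Thus condition~1 of multi-unitarity holds iff $(i,s_1)\mapsto\bigl(\sigma_2^i\sigma_1^{-i}(s_1),\,\sigma_3^i\sigma_1^{-i}(s_1)\bigr)$ is injective, and conditions~2 and~3 amount to injectivity of the corresponding maps in the displays above.

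The core of the argument is then to translate this injectivity into condition (\ref{ortho2}). Suppose $(i,s_1)$ and $(j,t_1)$ collide under the map of condition~1 with $i\neq j$; write $s=\sigma_1^{-i}(s_1)$, $s'=\sigma_1^{-j}(t_1)$ and assume $i>j$. The two coordinate equalities read $\sigma_2^i(s)=\sigma_2^j(s')$ and $\sigma_3^i(s)=\sigma_3^j(s')$; since powers of a fixed permutation commute, cancelling $\sigma_2^j$ resp. $\sigma_3^j$ gives $\sigma_2^{\,k}(s)=s'=\sigma_3^{\,k}(s)$ with $k=i-j$. Conversely, given $s$ and $k\in\{1,\dots,d-1\}$ with $\sigma_2^k(s)=\sigma_3^k(s)$, taking $j=1$, $i=1+k$ produces an honest collision. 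As $i,j$ run over $\{1,\dots,d\}$ the difference $|i-j|$ runs over exactly $\{1,\dots,d-1\}$, matching the quantifier in (\ref{ortho2}). Hence condition~1 is equivalent to $\sigma_2^k(s)\neq\sigma_3^k(s)$ for all $s\in\{1,\dots,d\}$ and all $k\in\{1,\dots,d-1\}$; the same computation with the parties permuted shows condition~2 is equivalent to $\sigma_1^k(s)\neq\sigma_3^k(s)$ and condition~3 to $\sigma_1^k(s)\neq\sigma_2^k(s)$ for all such $s,k$. Since $\{2,3\},\{1,3\},\{1,2\}$ exhaust the unordered pairs and the relation $\sigma_a^k(s)\neq\sigma_b^k(s)$ is symmetric in $a,b$, the conjunction of the three multi-unitarity conditions is precisely (\ref{ortho2}).

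I expect the only delicate point to be the bookkeeping in the collision analysis: one must check that within each coordinate only powers of a \emph{single} permutation appear, so that the commutation/cancellation step is legitimate, and that the set of differences $|i-j|$ with $1\le i,j\le d$ is exactly $\{1,\dots,d-1\}$, so the equivalence with (\ref{ortho2}) holds on the nose rather than up to a harmless reindexing. Everything else is routine.
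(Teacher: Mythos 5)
Your proof is correct, but it takes a genuinely different route from the paper's. The paper works directly with Definition \ref{def:secretsharing}(a): it computes the reduced densities of the maximally mixed code state, observes $S(P_j)=\log d$ and $S(P_aP_b)\le 2\log d$ with equality exactly when the vectors $\ket{\sigma_a^i(s)\sigma_b^i(s)}$ are pairwise distinct, and then reads off minimality of $I_3$ from the entropic lower bound; the distinctness condition is asserted (without the explicit collision analysis) to be equivalent to (\ref{ortho2}). You instead invoke the multi-unitarity criterion of Theorem \ref{lemma:m-u}, identify the $0$--$1$ tensor, note that $t$, $t^R$, $t^\Gamma$ each have exactly one nonzero entry per column so that unitarity reduces to injectivity of a map on basis labels, and carry out the collision bookkeeping in full. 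The two approaches meet in the same combinatorial fact --- distinctness of the pairs $(\sigma_a^i(s),\sigma_b^i(s))$ --- but yours spells out the reduction of that fact to (\ref{ortho2}) (the step the paper leaves implicit), and it yields the three recovery unitaries explicitly as permutation matrices via (\ref{eqn:get-u}), whereas the paper's argument is shorter and stays entirely within the entropic formulation, never touching the tensor $t$. One cosmetic point: the normalization in (\ref{eqn:permu}) is written $1/\sqrt{n}$ but must be $1/\sqrt{d}$ for your matching against (\ref{eqn:tensor-rep}) to give a $0$--$1$ tensor; you silently (and correctly) assume this.
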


\begin{proof}
From Definition \ref{def:secretsharing}, we use the definition (a) of a secret sharing scheme. Namely, we need to show that the condition (\ref{ortho2}) is equivalent to $I_3=-2S(R)=-2\log d$ for the state $\trho=\sum_i\op{\wt i}{\wt i}/d$.\par
It is easy to show that for any $j\in\{1,2,3\}$,
$$\trho^{P_j}=\frac{1}{d}\sum_s \op{s}{s}^{P_j},$$
 and therefore we have $S(P_j) = \log d$. Moreover, for any $a\neq b \in \{1,2,3\}$,
$$
\trho^{P_aP_b}=\frac{1}{d^2}\sum_{is}\op{\sigma_a^i(s)\sigma_b^i(s)}{\sigma_a^i(s)\sigma_b^i(s)}^{P_aP_b},
$$
so $S(P_aP_b) \leq 2\log d$, with equality holds if and only if
  $\{\ket{\sigma_a^i(s)\sigma_b^i(s)}^{P_aP_b}\}_{i,s = 1}^{d}$ forms an orthonormal basis. Note that the orthonormality requirement is equivalent to the condition (\ref{ortho2}). In addition,
  \begin{align*}
      I_3&=\sum_j S(P_j)-\sum_{a\neq b}S(P_aP_b)+S(R)\\
      &\geq 3\log d - 3\cdot 2\log d +\log d =-2\log d,
  \end{align*}
  with equality obtained if and only if equality hold for $S(P_aP_b) \leq 2\log d$. Thus condition (\ref{ortho2}) is satisfied if and only if $\S$ is a ((2,3)) threshold scheme.
\end{proof}
Using a concrete set of permutations that satisfies (\ref{ortho2}), we now provide a ready-to-use ((2,3)) threshold scheme. Remark that this protocol works for all dimensions $d$, which is an improvement over the existing examples of minimal $I_3$.
\begin{example}\label{concrete} Let $\S\subset P_1P_2P_3$ be generated by
  \begin{equation}
    \label{the-ex}
    \ket{\wt{i}} = \frac{1}{\sqrt{d}} \sum_{s=1}^d \ket{s}^{P_1}\ket{s + k_1 i}^{P_2}\ket{s + k_2 i}^{P_3},
  \end{equation}
  where $k_1\neq k_2$, and both $k_1$ and $k_2$ are coprime with $d$. The additions are mod $d$.
  \end{example}
  It is not hard to verify that this is indeed a permutation code space in the form of \eqref{eqn:permu}, and that it satisfies \eqref{ortho2}. Therefore, the code space is a ((2,3)) threshold scheme. \par
  We can also find the unitaries for erasure correction in the definition (b) of secret sharing scheme. We write the basis in the tensor form\begin{align*}
      \ket{\wt{i}} = \frac{1}{\sqrt{d}} \sum_{s_1s_2s_3}t_{is_1s_2s_3}\ket{s_1s_2s_3}^{P_1P_2P_3},
  \end{align*}
  where $t_{is_1s_2s_3}=\delta_{s_2,s_1+k_1i}\delta_{s_3,s_1+k_2i}$. Then we have the usual construction for the error correcting unitaries using
  \eqref{eqn:get-u}.

\par Note that this is a generalization of a well-known example \cite{Harlow, Lo} of ((2,3)) threshold
scheme given by
\begin{align*}
& \ket{\widetilde{0}} = \frac{1}{\sqrt{3}}(\ket{000}+ \ket{111} + \ket{222}), \\
& \ket{\widetilde{1}} = \frac{1}{\sqrt{3}}(\ket{012}+ \ket{120} + \ket{201}), \\
& \ket{\widetilde{2}} = \frac{1}{\sqrt{3}}(\ket{021}+ \ket{102} + \ket{201}).
\end{align*}
Our example \ref{concrete} reduces to this when $d=3$, $k_1=1$, $k_2=2$.

\subsection{An imperfect secret sharing protocol with a VIP party}\hfill
\par We provide an secret sharing protocol such that after the referee send a secret to $P_1$, $P_2$ and $P_3$, \begin{itemize}
    \item $\{P_1,P_3\}$ or $\{P_2,P_3\}$ together can reconstruct the secret, but
    \item $\{P_1,P_2\}$ together cannot reconstruct the secret.
\end{itemize}
It is as if the party $P_3$ is a VIP, since in order to reconstruct the secret, party $P_3$ has to be present. However, $P_3$ is not too powerful because he alone still cannot decode the message.
\par  We define the code space $\S\subset P_1P_2P_3$ by fixing the basis
\begin{equation}
  \ket{\widetilde{i}} = \sum_{j,k,l = 1}^{d}  \frac{1}{\sqrt{d}} t_{ijkl}\ket{jkl}^{P_1P_2P_3}\text{, where }
  t_{ijkl} := \frac{1}{\sqrt{d}}\bra{j} \lambda_k u_l \ket{i},
\end{equation}
where $\lambda_k$ is a shift operator and $u_{l}$ is a phase shift operator such that
\begin{equation}
  \lambda_k: \ket{j} \mapsto \ket{j+k},\q\q  u_l: \ket{i} \mapsto w^{il}\ket{i},
\end{equation}
where $w=e^{2\pi i/d}$. One can define the maps as usual,
\begin{align*}
    &t^{R'P_1'\to P_2P_3}:=\sum t_{ijkl}\op{kl}{ij},\\
    &t^{R'P_2'\to P_1P_3}:=\sum t_{ijkl}\op{jl}{ik},\\
    &t^{R'P_3'\to P_1P_2}:=\sum t_{ijkl}\op{jk}{il}.
\end{align*}
We can verify that indeed $t^{R'P_1'\to P_2P_3}$ and $t^{R'P_2'\to P_1P_3}$ are unitaries but not $t^{R'P_3'\to P_1P_2}$, thus giving the pairs $\{P_1,P_3\}$ and $\{P_2,P_3\}$ the ability to recover the secret, but not $\{P_1,P_2\}$.\par
We show the calculation for $t^{R'P_1'\to P_2P_3}$ as an example:
\begin{align*}
\left(t^{R'P_1'\to P_2P_3}\right)
\left(t^{R'P_1'\to P_2P_3}\right)^\dagger=
\sum_{k,k',l,l'}\left(
\sum_{ij}t_{ijkl}t_{ijk'l'}^*
\right)
\op{kl}{k'l'},
\end{align*}
where \begin{align*}
    \sum_{ij}t_{ijkl}t_{ijk'l'}^*&=\frac{1}{d}
    \sum_{ij}\bra{i}u_{l'}^\dagger \lambda_{k'}^\dagger \ket{j}
    \bra{j}\lambda_k u_l\ket{i}\\
    &=\frac{1}{d}\Tr\left(u_{l'}^\dagger u_l\right)\delta_{kk'}=\delta_{ll'}\delta_{kk'}.
\end{align*}
One can also verify that $\sum_{ik}t_{ijkl}t_{ij'kl'}^*=\delta_{jj'}\delta_{ll'}$. Thus the maps $t^{R'P_1'\to P_2P_3}$ and $t^{R'P_2'\to P_1P_3}$ are unitary. But we have $\sum_{il}t_{ijkl}t_{ij'k'l}^*=\delta_{l'-j',l-j}$, so the map $t^{R'P_3'\to P_1P_2}$ is not unitary. \par
From Theorem \ref{lemma:m-u}, if there were a decoding scheme for $\{P_1,P_2\}$, the error-correcting unitary must be uniquely defined to be equal to $t^{R'P_3'\to P_1P_2}$. But here we do not have the unitarity, so there is no decoding scheme for parties $\{P_1,P_2\}$.\par
Moreover, we must note that this is not the trivial case where all the secret is contained in $P_3$. It can be shown that $$\| t^{R'P_3'\to P_1P_2}\|^2=d.$$
Thus from Lemma \ref{tensor3}, $I(R,P_3)\leq \log d$ for any $\trho\in\S$. So at least we can show for the maximally mixed state $\sum_i\op{\wt i}{\wt i}/d$,$$
I(R,P_3)\leq \log d<2\log d=I(R,P_1P_2P_3),
$$which implies that party $P_3$ alone cannot recover the secret. \par
Moreover, we have \begin{equation}
  -2S(R) \leq I_3(\trho) \leq -2S(R) + \log d\text{, for any }\trho\in\S.
\end{equation}
Interestingly, we see that $I_3$ remains non-positive for both the pure state ($I_3=0$) and the maximally mixed state ($I_3=-\log d$). Our conjecture is that $I_3\leq 0$ holds for all $\trho\in\S$. This property is called monogamy and has significant implications in the context of holography and AdS/CFT correspondence \cite{Hayden_2013}.


\section{Conclusion}
In summary, our note develops a connection between tripartite information $I_3$ and secret sharing protocols. In particular, we observed that the sharing protocol is perfect if and only if the tripartite information is minimal for all states in the secret sharing protocol. Moreover, we showed that perfect secret sharing protocol is also equivalent to the recovery unitary defined in Harlow coming from multi-unitary. \\
\par Based on the connection of tripartite information and perfect secret sharing protocol, we find imperfect sharing schemes given by Page-scrambling unitaries working for almost all of Alice's secrets and VIP models with preference to one fo three parties.

\bibliographystyle{IEEEtran}

\bibliography{}

\end{document}